\newclass{\LOCAL}{LOCAL}
\newclass{\ID}{ID}
\newclass{\OI}{OI}
\newclass{\PO}{PO}
\newclass{\EC}{EC}
\newcommand{\FM}{{\small FM}\xspace}
\newcommand{\Alg}{\ensuremath{\mathcal{A}}}
\newcommand{\GG}{G\mspace{-1mu}G}
\newcommand{\HH}{H\mspace{-3mu}H}
\newcommand{\GH}{G\mspace{-2mu}H}
\newcommand{\N}{\ensuremath{\mathbb{N}}}
\DeclareMathOperator{\dist}{dist}
\newtheorem{theorem}{Theorem}
\newtheorem{corollary}[theorem]{Corollary}
\newtheorem{lemma}[theorem]{Lemma}
\newtheorem{fact}[theorem]{Fact}
\newtheorem{step}{Step}
\newtheorem*{question}{Open question}
\theoremstyle{definition}
\newtheorem{definition}{Definition}
\newenvironment{myabstract}
               {\list{}{\listparindent 1.5em%
                        \itemindent    \listparindent
                        \leftmargin    0pt
                        \rightmargin   0pt
                        \parsep        0pt}%
                \item\relax}
               {\endlist}
\newenvironment{mycover}
               {\list{}{\listparindent 0pt
                        \itemindent    \listparindent
                        \leftmargin    0pt
                        \rightmargin   0pt
                        \parsep        0pt}%
                \raggedright
                \item\relax}
               {\endlist}
\begin{document}

\vspace*{2ex}
\begin{mycover}
{\LARGE \textbf{Linear-in-\texorpdfstring{$\Delta$}{Delta} Lower Bounds in the {\sf LOCAL} Model}}
\bigskip
\bigskip

\textbf{Mika G\"o\"os} \\
{\small Department of Computer Science, University of Toronto, Canada\\
\nolinkurl{mika.goos@mail.utoronto.ca}\par}
\bigskip

\textbf{Juho Hirvonen} \\
{\small Helsinki Institute for Information Technology HIIT,\\Department of Computer Science, University of Helsinki, Finland\\
\nolinkurl{juho.hirvonen@cs.helsinki.fi}\par}
\bigskip

\textbf{Jukka Suomela} \\
{\small Helsinki Institute for Information Technology HIIT,\\Department of Computer Science, University of Helsinki, Finland\\
\nolinkurl{jukka.suomela@cs.helsinki.fi}\par}
\end{mycover}
\bigskip
\begin{myabstract}
\noindent\textbf{Abstract.}
By prior work, there is a distributed algorithm that finds a maximal fractional matching (maximal edge packing) in $O(\Delta)$ rounds, where $\Delta$ is the maximum degree of the graph. We show that this is optimal: there is no distributed algorithm that finds a maximal fractional matching in $o(\Delta)$ rounds.

Our work gives the first linear-in-$\Delta$ lower bound for a natural graph problem in the standard model of distributed computing---prior lower bounds for a wide range of graph problems have been at best logarithmic in $\Delta$.
\end{myabstract}
\thispagestyle{empty}
\setcounter{page}{0}
\newpage

\section{Introduction}

This work settles the distributed time complexity of the maximal fractional matching problem as a function of $\Delta$, the maximum degree of the input graph.

By prior work~\cite{astrand10vc-sc}, there is a distributed algorithm that finds a maximal fractional matching (also known as a maximal edge packing) in $O(\Delta)$ communication rounds, independently of the number of nodes. In this work, we show that this is optimal: there is no distributed algorithm that finds a maximal fractional matching in $o(\Delta)$ rounds.

This is the first linear-in-$\Delta$ lower bound for a natural graph problem in the standard $\LOCAL$ model of distributed computing. It is also a step towards understanding the complexity of the non-fractional analogue, the maximal matching problem, which is a basic symmetry breaking primitive in the field of distributed graph algorithms. For many related primitives, the prior lower bounds in the $\LOCAL$ model have been at best logarithmic in $\Delta$.

\subsection{Matchings}

Simple randomised distributed algorithms that find a maximal matching in time $O(\log n)$ have been known since the 1980s \cite{alon86fast,israeli86matching,luby86simple}. Currently, the fastest algorithms that compute a maximal matching stand as follows:
\begin{itemize}[label=$-$]
\item {\bf Dense graphs.} There is a recent $O(\log\Delta + \log^4\log n)$-time randomised algorithm due to Barenboim et al.~\cite{barenboim12locality}. The fastest known deterministic algorithm runs in time $O(\log^4 n)$ and is due to Ha{\'n}{\'c}kowiak et al.~\cite{hanckowiak01distributed}.
\item {\bf Sparse graphs.} There is a $O(\Delta + \log^*n)$-time deterministic algorithm due to Panconesi and Rizzi~\cite{panconesi01some}. Here $\log^* n$ is the iterated logarithm of $n$, a very slowly growing function. 
\end{itemize}

Our focus is on the sparse case. It is a long-standing open problem to either improve on the $O(\Delta+\log^*n)$-time algorithm of Panconesi and Rizzi, or prove it optimal by finding a matching lower bound. In fact, Linial's~\cite{linial92locality} seminal work already implies that $O(\Delta) + o(\log^* n)$ rounds is not sufficient. This leaves us with the following possibility (see Barenboim and Elkin \cite[Open Problem 10.6]{barenboim13distributed}):
\begin{question}
Can maximal matchings be computed in time $o(\Delta)+O(\log^*n)$?
\end{question}

We conjecture that there are no such algorithms. The lower bound presented in this work builds towards proving conjectures of this form.

\subsection{Fractional matchings}

While a matching associates a weight $0$ or $1$ with each edge of a graph, with $1$ indicating that the edge is in a matching, a fractional matching (\FM) associates a weight between $0$ and $1$ with each edge. In both cases, the total weight of the edges incident to any given node has to be at most~$1$.

Formally, let $G = (V,E)$ be a simple undirected graph and let $y\colon E \to [0,1]$ associate weights to the edges of $G$. Define, for each $v\in V$,
\[
    y[v] := \sum_{e \in E: v \in e} y(e).
\]
The function $y$ is called a \emph{fractional matching}, or an \FM for short, if $y[v] \le 1$ for each node $v$. A node $v$ is \emph{saturated} if $y[v]=1$.

There are two interesting varieties of fractional matchings.
\begin{itemize}[label=$-$]
\item {\bf Maximum weight.}
An \FM $y$ is of \emph{maximum weight}, if its total weight $\sum_{e\in E} y(e)$ is the maximum over all fractional matchings on $G$.
\item {\bf Maximality.}
An \FM $y$ is \emph{maximal}, if each edge $e$ has at least one saturated endpoint $v \in e$.
\end{itemize}
See below for examples of (a)~a maximum-weight \FM{}, and (b)~a maximal \FM{}; the saturated nodes are highlighted.
\begin{center}
    \includegraphics[page=1]{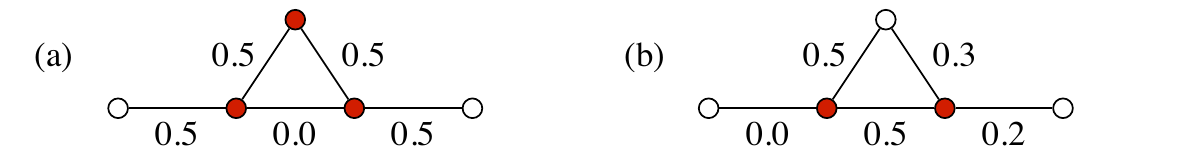}
\end{center}

\paragraph{Distributed complexity.}
The distributed complexity of computing maximum-weight \FM{}s is completely understood. It is easy to see that computing an exact solution requires time $\Omega(n)$ already on odd-length path graphs. If one settles for an approximate solution, then \FM{}s whose total weight is at least a $(1-\epsilon)$-fraction of the maximum can be computed in time $O(\epsilon^{-1}\log\Delta)$ by the well-known results of Kuhn et al.~\cite{kuhn04what,kuhn06price,kuhn10local}. This is optimal: Kuhn et al.\ also show that any constant-factor approximation of maximum-weight \FM{}s requires time $\Omega(\log\Delta)$.

By contrast, the complexity of computing maximal \FM{}s has not been understood. A maximal \FM is a $1/2$-approximation of a maximum-weight \FM, so the results of Kuhn et al.\ imply that finding a maximal \FM requires time $\Omega(\log \Delta)$, but this lower bound is exponentially small in comparison to the $O(\Delta)$ upper bound~\cite{astrand10vc-sc}.

\subsection{Contributions}

We prove that the $O(\Delta)$-time algorithm~\cite{astrand10vc-sc} for maximal fractional matchings is optimal:
\begin{theorem} \label{thm:main}
There is no $\LOCAL$ algorithm that finds a maximal \FM in $o(\Delta)$ rounds.
\end{theorem}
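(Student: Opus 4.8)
The plan is a contradiction argument. Suppose there is a $\LOCAL$ algorithm $\Alg$ that outputs a maximal \FM in $T(\Delta)$ rounds with $T(\Delta)=o(\Delta)$ (we may take $\Alg$ deterministic; the randomised case reduces to this), and fix a large $\Delta$ with $r:=T(\Delta)\le\Delta/c$ for a suitable constant $c$; the goal is to produce an instance on which $\Alg$ is forced to be wrong. First I would pass to a more rigid class of algorithms. By the Ramsey-type argument of Naor and Stockmeyer, if the identifier space is taken large enough then $\Alg$ may be assumed \emph{order-invariant}: the weight it assigns to an edge $e$ depends only on the isomorphism type of the radius-$r$ neighbourhood of $e$ together with the relative order of the identifiers appearing in it. We henceforth use only graphs of girth greater than $4r+4$, so that these neighbourhoods are trees; there are then only finitely many possible ``local views'' (a number depending on $\Delta$ alone), and $\Alg$ is just a fixed function from local views to weights in $[0,1]$.

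The core of the proof is a family of hard instances. I would construct high-girth graphs with a deliberately non-uniform degree sequence, arranged so that two properties hold at once. \emph{(i)~Local homogeneity}: there is a ``chain'' of $\Theta(\Delta)$ pairwise-far-apart edges that all have the same local view, so that by order-invariance $\Alg$ assigns them one and the same weight. \emph{(ii)~Global rigidity}: the capacity constraints $y[v]\le 1$ are tight everywhere in the construction --- there are no pendant leaves or other slack, which is exactly what would otherwise let a node be saturated cheaply and destroy rigidity --- so that in \emph{every} maximal \FM the weights along this chain cannot all be equal (the construction should force a ``varying offset'' to propagate along the chain through the tight node constraints, so that two of the chain edges, though they look identical out to radius $r$, must receive different weights). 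Since $\Alg$ is assumed correct, its output is itself a maximal \FM of this graph, and then (i) and (ii) collide; the failure surfaces either as a node with $y[v]>1$ (the output is not a fractional matching) or as an edge with two unsaturated endpoints (the output is not maximal), depending on the details. Equivalently, one can phrase it as a hybrid over $\Theta(\Delta)$ graphs that all agree on the radius-$r$ ball of one fixed edge while the weight that edge must carry in any maximal \FM drifts by $\gtrsim 1/\Delta$ per step, so that it leaves $[0,1]$ after $\Theta(\Delta)$ steps although $\Alg$'s answer on it is pinned.

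I expect the construction to be the main obstacle, for two linked reasons. First, (i) and (ii) pull against each other: local homogeneity invites regular, vertex-transitive graphs, but on a $\Delta$-regular graph the trivial algorithm that puts weight $1/\Delta$ on every edge already outputs a maximal \FM, so regularity is useless; one genuinely needs non-uniform degrees, yet the non-uniformity that determines the ``right'' weights must be hidden beyond radius $r$. Second, the bound must be $\Delta$, not $n$: a naive attempt using long low-degree tails fails because maximal \FM{}s are far from rigid along low-degree structures, so the rigidity has to be produced by gadgets whose diameter --- the distance one must look to learn the forced weight --- is $\Theta(\Delta)$, which is where the maximum degree, rather than the number of nodes, enters. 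Designing such a gadget with high girth, packing $\Theta(\Delta)$ mutually-indistinguishable copies of it into a single graph, and proving the resulting weights are genuinely forced to differ is, I expect, the crux; the Ramsey reduction and the final bookkeeping should be routine.
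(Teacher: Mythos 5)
Your high-level plan---derandomise, then reduce to an order-invariant algorithm, then build a family of high-girth instances where local homogeneity clashes with global rigidity of maximal fractional matchings---is in the right spirit, and the intuition about why regular graphs are useless and why the rigidity must come from $\Theta(\Delta)$-diameter gadgets is exactly correct. However, there is a real gap at the very first reduction, and the part you flag as ``the crux'' is indeed where the substantive work lives.

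The step ``by the Ramsey-type argument of Naor and Stockmeyer, $\Alg$ may be assumed order-invariant'' does not go through as stated. The Naor--Stockmeyer argument colours identifier tuples by the algorithm's output and extracts a monochromatic set via Ramsey's theorem; this requires the output alphabet to be \emph{finite}. A fractional matching assigns a weight $y(e)\in[0,1]$ to each edge, so the output space is unbounded, and the simulation $\OI\leadsto\ID$ (and hence also $\PO\leadsto\ID$) is known to \emph{fail} for problems with unbounded outputs---there are explicit counterexamples among natural graph problems. The paper spends a full subsection on this obstacle. Its workaround is to apply Ramsey only to the \emph{saturation indicator} $\Alg^*(G,v)\in\{0,1\}$ (which does have finite range), conclude that $\Alg$ must saturate all nodes of suitable (``loopy'') neighbourhoods, and then use a \emph{propagation principle} for fully saturated fractional matchings to show that, on such neighbourhoods, the actual edge weights $\Alg$ outputs cannot depend on the numerical values of the identifiers either: any disagreement would propagate from saturated node to saturated node and escape the $t$-ball, reaching a node whose view is unchanged, a contradiction. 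Without this ad hoc argument your reduction to $\OI$ is unjustified.

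On the construction itself: your ``drift by $1/\Delta$ along a chain of indistinguishable edges'' picture is a reasonable guess, but you do not provide the gadget, and the paper's actual mechanism is different in a way worth knowing. The paper does not descend to $\OI$ and stop; it descends further, through $\PO$ and into a weak anonymous edge-colouring model $\EC$, and works with \emph{multigraphs with loops}. Loops are the key resource: a node with a loop has, in any simple lift, a neighbour whose view is isomorphic to its own, so an anonymous algorithm must treat them identically, which forces every node of a loopy graph to be saturated. The hard instances are then built inductively by an \emph{unfold-and-mix} argument: one maintains a pair of loopy, tree-plus-loops graphs $(G_i,H_i)$ and nodes $g_i,h_i$ with isomorphic $i$-balls on which the algorithm nonetheless assigns different weights to a common loop; to go from $i$ to $i+1$, one unfolds that loop in $G_i$ to a $2$-lift, glues a copy of $G_i-e$ to a copy of $H_i-f$ across a new edge, and uses the propagation principle (again!) on the common subtree to locate a \emph{new} loop on which the two outputs must disagree, one ball further out. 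After $\Delta-2$ steps one reaches a pair that refutes any $o(\Delta)$-time algorithm. So the ``gadget'' is not a single static graph with a long rigid chain, but a recursively constructed pair of graphs, and the rigidity you are after is supplied not by tight degree constraints alone but by loopiness plus propagation through saturated nodes. Your two reasons for why the construction is delicate are both vindicated by the paper's design, but the proposal as written stops short of the two ideas---the loop mechanism in the anonymous model, and the propagation principle doing double duty in both the construction and the $\OI\leadsto\ID$ reduction---that actually carry the proof.
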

To our knowledge, this is the first linear-in-$\Delta$ lower bound in the $\LOCAL$ model for a classical graph problem. Indeed, prior lower bounds have typically fallen in one of the following categories:
\begin{itemize}[label=$-$,noitemsep]
    \item they are logarithmic in $\Delta$ \cite{kuhn04what,kuhn06price,kuhn10local},
    \item they analyse the complexity as a function of $n$ for a fixed $\Delta$ \cite{czygrinow08fast,floreen11max-min-lp,goos12local-approximation,goos12bipartite-vc,lenzen08leveraging,linial92locality,naor95what},
    \item they only hold in a model that is strictly weaker than $\LOCAL$ \cite{hirvonen12maximal-matching,kuhn06complexity}.
\end{itemize}

We hope that our methods can eventually be extended to analyse algorithms (e.g., for maximal matching) whose running times depend mildly on $n$.

\subsection{The \texorpdfstring{$\LOCAL$}{LOCAL} model}\label{ssec:localmodel}

Our result holds in the standard $\LOCAL$ model of distributed computing~\cite{linial92locality,peleg00distributed}. For now, we only recall the basic setting; see Section~\ref{sec:tools} for precise definitions.

In the $\LOCAL$ model an input graph $G = (V,E)$ defines both the problem instance and the structure of the communication network. Each node $v \in V$ is a computer and each edge $\{u,v\} \in E$ is a communication link through which nodes $u$ and $v$ can exchange messages. Initially, each node is equipped with a \emph{unique identifier} and, if we study randomised algorithms, a source of randomness. In each \emph{communication round}, each node in parallel (1)~sends a message to each neighbour, (2)~receives a message from each neighbour, and (3)~updates its local state. Eventually, all nodes have to stop and announce their local outputs---in our case the local output of a node $v \in V$ is an encoding of the weight $y(e)$ for each edge $e$ incident to~$v$. The \emph{running time} $t$ of the algorithm is the number of communication rounds until all nodes have stopped. We call an algorithm \emph{strictly local}, or simply \emph{local}, if $t=t(\Delta)$ is only a function of $\Delta$, i.e., independent of $n$.

The $\LOCAL$ model is the strongest model commonly in use---in particular, the size of each message and the amount of local computation in each communication round is unbounded---and this makes \emph{lower bounds} in this model very widely applicable.

\section{Overview}

The maximal \FM problem is an example of a \emph{locally checkable} problem: there is a local algorithm that can check whether a proposed function $y$ is a feasible solution.

It is known that randomness does not help a local algorithm in solving a locally checkable problem~\cite{naor95what}: if there is a $t(\Delta)$-time worst-case randomised algorithm, then there is a $t(\Delta)$-time deterministic algorithm (see Appendix~\ref{app:randomness}). Thus, we need only prove our lower bound for deterministic algorithms.

\subsection{Deterministic models} \label{ssec:models}

Our lower bound builds on a long line of prior research. During the course of the proof, we will visit each of the following deterministic models (see Figure~\ref{fig:models}), whose formal definitions are given in Section~\ref{sec:tools}.
\begin{itemize}[align=left,labelindent=0ex,labelwidth=5ex,labelsep*=0ex,leftmargin=5ex]
    \item[$\boldsymbol\ID$:] \emph{Deterministic $\LOCAL$.} Each node has a unique identifier \cite{peleg00distributed,linial92locality}. This is the standard model in the field of deterministic distributed algorithms.
    \item[$\boldsymbol\OI$:] \emph{Order-invariance.} The output of an algorithm is not allowed to change if we relabel the nodes while preserving the relative order of the labels \cite{naor95what}. Equivalently, the algorithm can only compare the identifiers, not access their numerical value.
    \item[$\boldsymbol\PO$:] \emph{Port numbering and orientation.} For each node, there is an ordering on the incident edges, and all edges carry an orientation \cite{mayer95local}.
    \item[$\boldsymbol\EC$:] \emph{Edge colouring.} A proper edge colouring with $O(\Delta)$ colours is given \cite{hirvonen12maximal-matching}.
\end{itemize}

The models are listed here roughly in the order of decreasing strength. For example, the $\ID$ model is strictly stronger than $\OI$, which is strictly stronger than $\PO$. However, the $\EC$ model is not directly comparable: there are problems that are trivial to solve in $\ID$, $\OI$, and $\PO$ but impossible to solve in $\EC$ with any deterministic algorithm (example: graph colouring in $1$-regular graphs); there are also problems that can be solved with a local algorithm in $\EC$ but they do not admit a local algorithm in $\ID$, $\OI$, or $\PO$ (example: maximal matching).

\begin{figure}[t]
    \centering
    \includegraphics[page=2]{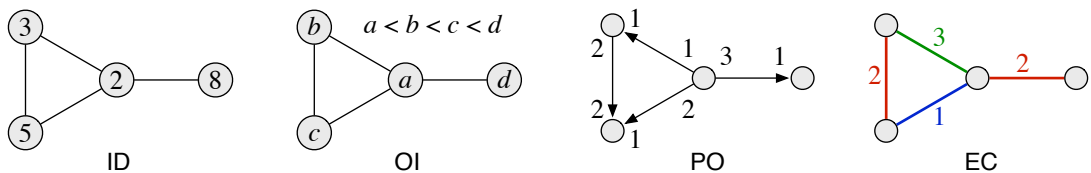}
    \caption{Deterministic models that are discussed in this work.}\label{fig:models}
\end{figure}

\subsection{Proof outline}

In short, our proof is an application of techniques that were introduced in two of our earlier works~\cite{hirvonen12maximal-matching,goos12local-approximation}. Accordingly, our proof is in two steps.

\paragraph{A weak lower bound.}

In our prior work \cite{hirvonen12maximal-matching} we showed that \emph{maximal matchings} cannot be computed in time $o(\Delta)$ in the weak $\EC$ model. The lower-bound construction there is a regular graph, and as such, tells us very little about the fractional matching problem, since maximal fractional matchings are trivial to compute in regular graphs.

Nevertheless, we use a similar \emph{unfold-and-mix} argument on what will be called \emph{loopy $\EC$-graphs} to prove the following intermediate result in Section~\ref{sec:lb-in-ec}:
\begin{step} \label{step:one}
The maximal \FM problem cannot be solved in time $o(\Delta)$ on loopy $\EC$-graphs.
\end{step}
The proof heavily exploits the limited symmetry breaking capabilities of the $\EC$ model. To continue, we need to argue that similar limitations exist in the $\ID$ model.

\paragraph{Strengthening the lower bound.}
To extend the lower bound to the $\ID$ model, we give a series of local simulation results
\[
\EC \leadsto \PO \leadsto \OI \leadsto \ID,
\]
which state that a local algorithm for the maximal fractional matching problem in one model can be simulated fast in the model preceding it. That is, even though the models $\EC$, $\PO$, $\OI$, and $\ID$ are generally very different, we show that the models are roughly equally powerful for computing a maximal fractional matching.

This part of the argument applies ideas from another prior work~\cite{goos12local-approximation}. There, we showed that, for a large class of optimisation problems, a run-time preserving simulation $\PO\leadsto\ID$ exists. Unfortunately, the maximal fractional matching problem is not included in the scope of this result (fractional matchings are not \emph{simple} in the sense of~\cite{goos12local-approximation}), so we may not apply this result directly in a black-box fashion. In addition, this general result does not hold for the $\EC$ model.

Nevertheless, we spend Section~\ref{sec:simulations} extending the methods of~\cite{goos12local-approximation} and show that they can be tailored to the case of fractional matchings:
\begin{step} \label{step:two}
If the maximal \FM problem can be solved in time $t(\Delta)$ on $\ID$-graphs, then it can be solved in time $t(\Theta(\Delta))$ on loopy $\EC$-graphs.
\end{step}

In combination with Step~\ref{step:one}, this proves Theorem~\ref{thm:main}.

\section{Tools of the Trade} \label{sec:tools}

Before we dive into the lower-bound proof, we recall the definitions of the four models mentioned in Section~\ref{ssec:models}, and describe the standard tools that are used in their analysis.

\subsection{Locality}

Distributed algorithms are typically described in terms of networked state machines: the nodes of a network exchange messages for $t$ synchronous communication rounds after which they produce their local outputs (cf.\ Section~\ref{ssec:localmodel}).

Instead, for the purposes of our lower-bound analysis, we view an algorithm $\Alg$ simply as a function that associates to each pair $(G,v)$ an output $\Alg(G,v)$ in a way that respects 
\emph{locality}. That is, an algorithm $\Alg$ is said to have run-time $t$, if the output $\Alg(G,v)$ depends only on the information that is available in the radius-$t$ neighbourhood around $v$. More formally, define
\[
\tau_t(G,v) \subseteq (G,v)
\]
as consisting of the nodes and edges of $G$ that are within distance $t$ from $v$---the distance of an edge $\{u,w\}$ from $v$ is $\min\{\dist(v,u),\dist(v,w)\}+1$. A $t$-time algorithm $\Alg$ is then a mapping that satisfies
\begin{equation} \label{eq:locality}
\Alg(G,v) = \Alg(\tau_t(G,v)).
\end{equation}

The information contained in $\tau_t(G,v)$ depends on which of the models $\EC$, $\PO$, $\OI$, and $\ID$ we are studying. For each model we define an associated graph class.

\subsection{Identifier-based networks}

\paragraph{\boldmath$\ID$-graphs.}
An \emph{$\ID$-graph} is simply a graph $G$ whose nodes are assigned unique identifiers; namely, $V(G)\subseteq \N$. Any mapping $\Alg$ satisfying (\ref{eq:locality}) is a $t$-time $\ID$-algorithm.

\paragraph{\boldmath$\OI$-graphs.}
An \emph{$\OI$-graph} is an ordered graph $(G,\preceq)$ where $\preceq$ is a linear order on $V(G)$. An $\OI$-algorithm $\Alg$ operates on $\OI$-graphs in such a way that if $(G,\preceq,v)$ and $(G',\preceq',v')$ are isomorphic (as ordered structures), then $\Alg(G,\preceq,v)=\Alg(G',\preceq',v')$.

\bigskip
Every $\ID$-graph $G$ is naturally an $\OI$-graph $(G,\leq)$ under the usual order $\leq$ on $\N$. In the converse direction, we often convert an $\OI$-graph $(G,\preceq)$ into an $\ID$-graph by specifying an $\ID$-assignment $\varphi\colon V(G)\to\N$ that \emph{respects} $\preceq$ in the sense that $v\preceq u$ implies $\varphi(v)\leq \varphi(u)$. The resulting $\ID$-graph is denoted $\varphi(G)$.

\subsection{Anonymous networks} \label{ssec:anonymous}

On anonymous networks the nodes do not have identifiers. The only symmetry breaking information is now provided in an \emph{edge colouring} of a suitable type. This means that whenever there is an isomorphism between $(G,v)$ and $(G',v')$ that preserves edge colours, we will have $\Alg(G,v)=\Alg(G',v')$.

\paragraph{\boldmath$\EC$-graphs.} An \emph{$\EC$-graph} carries a proper edge colouring $E(G)\to \{1,\ldots,k\}$, where $k=O(\Delta)$. That is, if two edges are adjacent, they have distinct colours.

\paragraph{\boldmath$\PO$-graphs.} A \emph{$\PO$-graph} is a directed graph whose edges are coloured in the following way: if $(u,v)$ and $(u,w)$ are outgoing edges incident to $u$, then they have distinct colours; and if $(v,u)$ and $(w,u)$ are incoming edges incident to $u$, then they have distinct colours. Thus, we may have $(v,u)$ and $(u,w)$ coloured the same.

\bigskip

We find it convenient to treat $\PO$-graphs as edge-coloured digraphs, even if this view is nonstandard. Usually, $\PO$-graphs are defined as digraphs with a \emph{port numbering}, i.e., each node is given an ordering of its neighbours. This is equivalent to our definition: A port numbering gives rise to an edge colouring where an edge $(u,v)$ is coloured with $(i,j)$ if $v$ is the $i$-th neighbour of $u$ and $u$ is the $j$-th neighbour of $v$ (see Figure~\ref{fig:podef}a). Conversely, we can derive a port numbering from an edge colouring---first take all outgoing edges ordered by the edge colours, and then take all incoming edges ordered by the edge colours (Figure~\ref{fig:podef}b).

\begin{figure}
    \centering
    \includegraphics[page=8]{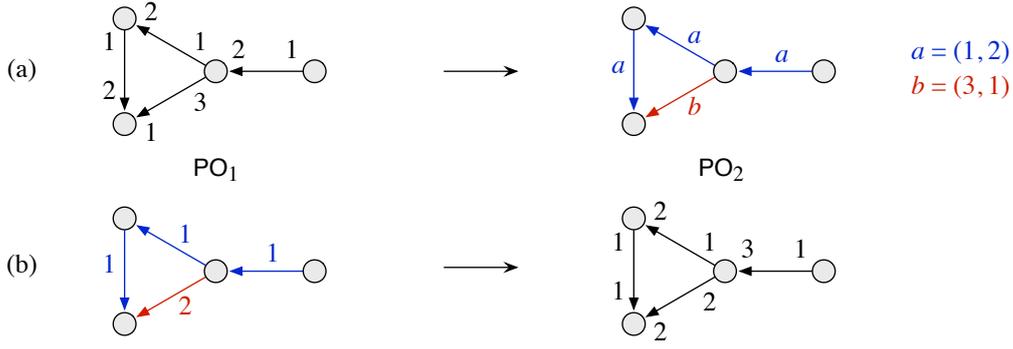}
    \caption{Two equivalent definitions of $\PO$-graphs: ($\PO_1$) a node of degree $d$ can refer to incident edges with labels $1,2,\dotsc,d$; ($\PO_2$) edges are coloured so that incoming edges have distinct colours and outgoing edges have distinct colours.}\label{fig:podef}
\end{figure}

We are not done with defining $\EC$ and $\PO$ algorithms. We still need to restrict their power by requiring that their outputs are \emph{invariant under graph lifts}, as defined next.

\subsection{Lifts}

A graph $H$ is said to be a \emph{lift} of another graph $G$ if there exists an onto graph homomorphism $\alpha\colon V(H)\to V(G)$ that is a \emph{covering map}, i.e., $\alpha$ preserves node degrees, $\deg_H(v) = \deg_G(\alpha(v))$. Our discussion of lifts always takes place in either $\EC$ or $\PO$; in this context we require that a covering map preserves edge colours.
\begin{center}
    \includegraphics[page=9]{figs.pdf}
\end{center}

The defining characteristic of anonymous models is that the output of an algorithm is invariant under taking lifts. That is, if $\alpha\colon V(H)\to V(G)$ is a covering map, then
\begin{equation} \label{eq:lift}
\Alg(H,v) = \Alg(G,\alpha(v)),\qquad\text{for each}\ v\in V(H).
\end{equation}
Since an isomorphism between $H$ and $G$ is a special case of a covering map, the condition~(\ref{eq:lift}) generalises the discussion in Section~\ref{ssec:anonymous}. We will be exploiting this limitation extensively in analysing the models $\EC$ and $\PO$.

Graphs are partially ordered by the \emph{lift} relation. For any connected graph $G$, there are two graphs $U_G$ and $F_G$ of special interest that are related to $G$ via lifts.

\paragraph{Universal cover \boldmath$U_G$.}
The \emph{universal cover} $U_G$ of $G$ is an unfolded tree-like version of~$G$. More precisely, $U_G$ is the unique tree that is a lift of $G$. Thus, if $G$ is a tree, $U_G = G$; if $G$ has cycles, $U_G$ is infinite. In passing from $G$ to $U_G$ we lose all the cycle structure that is present in $G$. The universal cover is often used to model the information that a distributed algorithm---even with unlimited running time---is able to collect on an anonymous network~\cite{angluin80local}.

\begin{center}
    \includegraphics[page=10]{figs.pdf}
\end{center}

\paragraph{Factor graph \boldmath$F_G$.}
The \emph{factor graph} $F_G$ of $G$ is the smallest graph $F$ such that $G$ is a lift of $F$; see Figure~\ref{fig:factorgraph}. In general, $F_G$ is a multigraph with loops and parallel edges. It is the most concise representation of all the global symmetry breaking information available in $G$. For example, in the extreme case when $G$ is vertex-transitive, $F_G$ consists of just one node and some loops.

\begin{figure}[b]
    \centering
    \includegraphics[page=11]{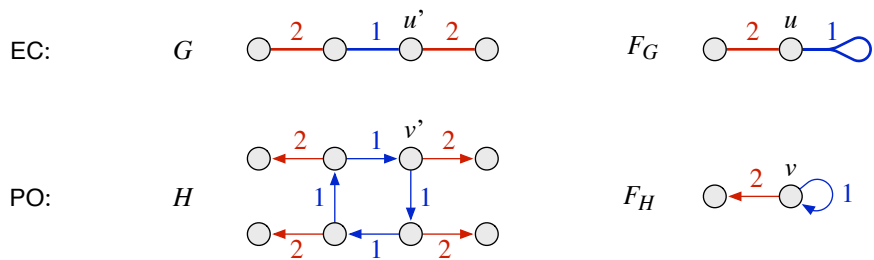}
    \caption{Factor graphs and loops. We follow the convention that undirected loops in $\EC$-graphs count as a single incident edge, while directed loops in $\PO$-graphs count as two incident edges: an incoming edge and an outgoing edge. In this example, both $u$ and its preimage $u'$ are nodes of degree $2$; they are incident to one edge of colour $1$ and one edge of colour $2$. Both $v$ and its preimage $v'$ are nodes of degree $3$; they are incident to two outgoing edges of colours $1$ and $2$, and one incoming edge of colour $1$.}\label{fig:factorgraph}
\end{figure}

Even though we want our input graphs always to be simple, we may still analyse $\EC$ and $\PO$-algorithms $\Alg$ on multigraphs $F$ with the understanding that the output $\Alg(F,v)$ is interpreted according to~(\ref{eq:lift}). That is, to determine $\Alg(F,v)$, do the following:
\begin{enumerate}[noitemsep]
\item Lift $F$ to a simple graph $G$ (e.g., take $G=U_F$) via some $\alpha\colon V(G)\to V(F)$.
\item Execute $\Alg$ on $(G,u)$ for some $u\in\alpha^{-1}(v)$.
\item Interpret the output of $u$ as an output of $v$.
\end{enumerate}
In what follows we refer to multigraphs simply as graphs.

\subsection{Loops} \label{sec:loops}

In $\EC$-graphs, a single loop on a node contributes $+1$ to its degree, whereas in $\PO$-graphs, a single (directed) loop contributes $+2$ to the degree, once for the tail and once for the head. This is reflected in the way we draw loops---see Figure~\ref{fig:factorgraph}.

The loop count on a node $v\in V(G)$ measures the inability of $v$ to break local symmetries. Indeed, if $v$ has $\ell$ loops, then in any simple lift $H$ of $G$ each node $u\in V(H)$ that is mapped to $v$ by the covering map will have $\ell$ distinct neighbours $w_1,\ldots,w_\ell$ that, too, get mapped to $v$. Thus, an anonymous algorithm is forced to output the same on $u$ as on each of $w_1,\ldots,w_\ell$.

We consider loops as an important resource.
\begin{definition}[Loopiness]
An edge-coloured graph $G$ is called \emph{$k$-loopy} if each node in $F_G$ has at least $k$ loops. A graph is simply \emph{loopy} if it is $1$-loopy.
\end{definition}

When computing maximal fractional matchings on a loopy graph $G$, an anonymous algorithm must saturate all the nodes. For suppose not. If $v\in V(G)$ is a node that does not get saturated, the loopiness of $G$ implies that $v$ has a neighbour $u$ (can be $u=v$ via a loop) that produces the same output as $v$. But now neither endpoint of $\{u,v\}$ is saturated, which contradicts maximality; see Figure~\ref{fig:ec-saturate}. We record this observation.

\begin{figure}
    \centering
    \includegraphics[page=13]{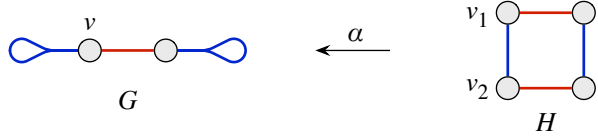}
    \caption{$\EC$-graph $G$ is loopy. Assume that an $\EC$-algorithm $\Alg$ produces an output in which node $v$ is unsaturated. Then we can construct a simple $\EC$-graph $H$ that is a lift of $G$ via $\alpha\colon V(H) \to V(G)$ such that $\alpha(v_1) = \alpha(v_2) = v$ and $\{v_1,v_2\} \in E(H)$. If we apply $\Alg$ to $H$, both $v_1$ and $v_2$ are unsaturated; hence $\Alg$ fails to produce a maximal \FM.}\label{fig:ec-saturate}
\end{figure}

\begin{lemma}\label{lem:ec-saturate}
Any $\EC$-algorithm for the maximal \FM problem computes a fully saturated \FM on a loopy $\EC$-graph.\qed
\end{lemma}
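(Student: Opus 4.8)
The statement is already almost established by the discussion preceding it; here is the plan made explicit. I would argue by contradiction. Assume $\Alg$ is an $\EC$-algorithm that solves the maximal \FM problem, let $G$ be a loopy $\EC$-graph, and suppose the output $y := \Alg(G,\cdot)$ leaves some node $v \in V(G)$ unsaturated, i.e.\ $y[v] < 1$. The first move is to reduce to the factor graph: let $\alpha_G \colon V(G) \to V(F_G)$ be the canonical covering map and set $\bar v := \alpha_G(v)$. Since $G$ is a lift of $F_G$, lift-invariance~(\ref{eq:lift}) gives $\Alg(G,v) = \Alg(F_G,\bar v)$, so $\bar v$ is unsaturated as well. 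By loopiness of $G$, the node $\bar v$ carries at least one loop in $F_G$; fix such a loop and let $c$ be its colour.

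Next I would unroll this loop inside a \emph{simple} lift. Take a finite $N$-fold lift $H$ of $F_G$, for an even and sufficiently large $N$: place $N$ copies over each node of $F_G$, realise the colour-$c$ loop at $\bar v$ by the perfect matching $\{1,2\},\{3,4\},\dots$ on the $N$ copies of $\bar v$, and realise every other loop and every parallel-edge class by generic matchings or permutations on the copies, so that for $N$ large enough (or after a routine re-routing) the result has neither loops nor parallel edges; thus $H$ is a simple $\EC$-graph. By construction $H$ contains an edge $\{v_1,v_2\}$ both of whose endpoints lie in $\alpha^{-1}(\bar v)$, where $\alpha \colon V(H) \to V(F_G)$ is the covering map of the lift. (Alternatively, one may take the infinite universal cover $H = U_{F_G}$ directly: the colour-$c$ edge at any preimage of $\bar v$ leads to another preimage of $\bar v$, and the two are distinct because $U_{F_G}$ is acyclic.)

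Finally I would derive the contradiction. Running $\Alg$ on the simple graph $H$ and invoking~(\ref{eq:lift}) once more, the output $y_H$ satisfies $y_H[v_1] = y_H[v_2] = y[\bar v] = y[v] < 1$, so neither endpoint of the edge $\{v_1,v_2\} \in E(H)$ is saturated. This contradicts the assumption that $\Alg$ produces a maximal \FM on $H$. Hence no node of $G$ can be left unsaturated, which is precisely the assertion of the lemma.

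I expect the only slightly delicate point to be the bookkeeping in the middle step: one must remember that loopiness is a property of $F_G$ rather than of $G$ itself (hence the initial reduction to the factor graph), and one must arrange the lifting data so that the unrolled loop produces two \emph{adjacent} copies of $\bar v$ while the remainder of $H$ stays simple. Both are standard facts about graph lifts and present no real difficulty; all the substance lies in the lift-invariance property~(\ref{eq:lift}) of anonymous algorithms, exactly as depicted in Figure~\ref{fig:ec-saturate}.
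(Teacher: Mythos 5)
Your argument is correct and follows exactly the approach the paper sketches: invoke lift-invariance~(\ref{eq:lift}) to manufacture a simple $\EC$-graph containing an edge whose two endpoints receive identical (unsaturated) outputs, contradicting maximality. The only point where you are slightly more careful than the paper's two-sentence justification and Figure~\ref{fig:ec-saturate} caption is that you explicitly descend to the factor graph $F_G$ before unrolling, correctly noting that loopiness is a property of $F_G$; this is a cosmetic refinement of the same proof rather than a different route.
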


\section{Lower Bound in \texorpdfstring{\boldmath$\EC$}{EC}} \label{sec:lb-in-ec}

In this section we carry out Step~\ref{step:one} of our lower-bound plan. To do this we extend the previous lower bound result~\cite{hirvonen12maximal-matching} to the case of maximal fractional matchings.

\subsection{Strategy}

Let $\Alg$ be any $\EC$-algorithm computing a maximal fractional matching. We construct inductively a sequence of $\EC$-graph pairs
\[
(G_i,H_i),\quad i=0,1,\ldots,\Delta-2,
\]
that witness $\Alg$ having run-time greater than $i$. Each of the graphs $G_i$ and $H_i$ will have maximum degree at most $\Delta$, so for $i=\Delta-2$, we will have the desired lower bound. More precisely, we show that there are nodes $g_i\in V(G_i)$ and $h_i\in V(H_i)$ satisfying the following property:
\begin{enumerate}[label=(P\arabic*)]
\item The $i$-neighbourhoods $\tau_i(G_i,g_i)$ and $\tau_i(H_i,h_i)$ are isomorphic---yet,
\[
\Alg(G_i,g_i) \neq \Alg(H_i,h_i).
\]
Moreover, there is a loop of some colour $c_i$ adjacent to both $g_i$ and $h_i$ such that the outputs disagree on its weight.
\end{enumerate}
We will also make use of the following additional properties in the construction:
\begin{enumerate}[resume*]
\item The graphs $G_i$ and $H_i$ are $(\Delta-1-i)$-loopy---consequently, $\Alg$ will saturate all their nodes by Lemma~\ref{lem:ec-saturate}.
\item When the loops are ignored, both $G_i$ and $H_i$ are trees.
\end{enumerate}

\subsection{Base case \texorpdfstring{$(i = 0)$}{(i = 0)}}

Let $G_0$ consist of a single node $v$ that has $\Delta$ differently coloured loops. When $\Alg$ is run on $G_0$, it saturates $v$ by assigning at least one loop $e$ a non-zero weight; see Figure~\ref{fig:ecbasecase}. Letting $H_0 := G_0-e$ it is now easy to check that the pair $(G_0,H_0)$ satisfies (P1--P3) for $g_0=h_0=v$. For example, $\tau_0(G_0,v)\cong\tau_0(H_0,v)$ only because we consider the loops to be at distance $1$ from $v$.

\begin{figure}
    \centering
    \includegraphics[page=3]{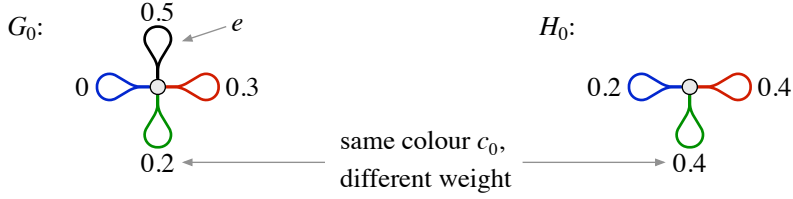}
    \caption{Base case. By removing a loop $e$ with a non-zero weight, we force the algorithm to change the weight of at least one edge that is present in both $G_0$ and~$H_0$.}\label{fig:ecbasecase}
\end{figure}

\subsection{Inductive step}

Suppose $(G_i,H_i)$ is a pair satisfying (P1--P3). For convenience, we write $G$, $H$, $g$, $h$, and $c$ in place of $G_i$, $H_i$, $g_i$, $h_i$, and $c_i$. Also, we let $e\in E(G)$ and $f\in E(H)$ be the colour-$c$ loops adjacent to $g$ and $h$ to which $\Alg$ assigns different weights.

To construct the pair $(G_{i+1},H_{i+1})$ we unfold and mix; see Figure~\ref{fig:ecunfoldmix}

\paragraph{Unfolding.}
First, we unfold the loop $e$ in $G$ to obtain a 2-lift $\GG$ of $G$. That is, $\GG$ consists of two disjoint copies of $G-e$ and a new edge of colour $c$ (which we still call $e$) that connects the two copies of $g$ in $\GG$. For notational purposes, we fix some identification $V(G)\subseteq V(\GG)$ so that we can easily talk about one of the copies. Similarly, we construct a 2-lift $\HH$ of $H$ by unfolding the loop $f$.

Recall that $\Alg$ cannot tell apart $G$ from $\GG$, or $H$ from $\HH$. In particular $\Alg$ continues to assign unequal weights to $e$ and $f$ in these lifts.

\begin{figure}
    \centering
    \includegraphics[page=4]{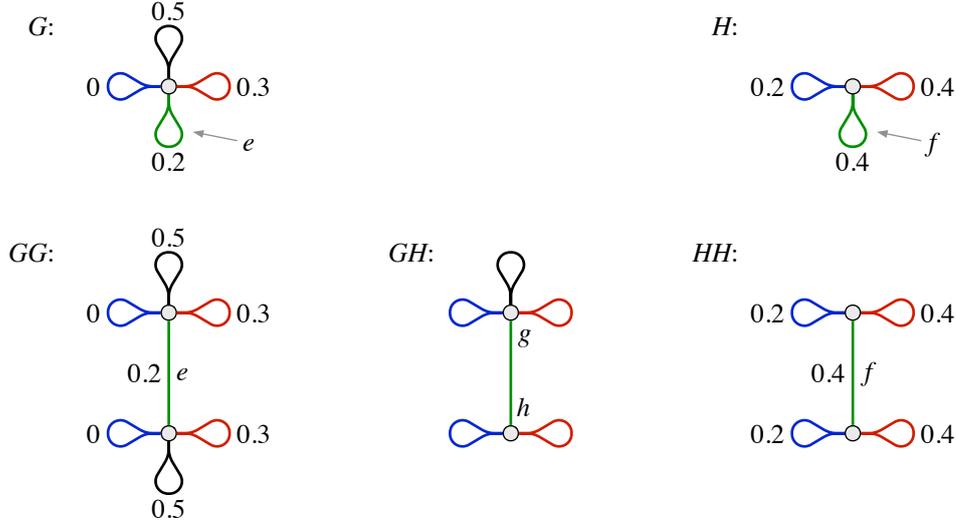}
    \caption{Unfold and mix. The weights of $e$ and $f$ differ; hence the weight of $\{g,h\}$ is different from the weight of $e$ or $f$.}\label{fig:ecunfoldmix}
\end{figure}

\paragraph{Mixing.}
Next, we mix together the graphs $\GG$ and $\HH$ to obtain a graph $\GH$ defined as follows: $\GH$ contains a copy of $G-e$, a copy of $H-f$, and a new colour-$c$ edge that connects the nodes $g$ and $h$. For notational purposes, we let $V(\GH) := V(G)\cup V(H)$, where we tacitly assume that $V(G)\cap V(H) = \varnothing$.

\paragraph{Analysis.}
Consider the weight that $\Alg$ assigns to the colour-$c$ edge $\{g,h\}$ in $\GH$.
Since $\Alg$ gives the edges $e$ and $f$ different weights in $\GG$ and $\HH$, we must have that the weight of $\{g,h\}$ differs from the weight of $e$ or the weight of $f$ (or both). We assume the former (the latter case is analogous), and argue that the pair
\[
(G_{i+1},H_{i+1}) := (\GG,\GH)
\]
satisfies the properties (P1--P3). It is easy to check that (P2) and (P3) are satisfied by the construction; it remains is to find the nodes $g_{i+1}\in V(\GG)$ and $h_{i+1}\in V(\GH)$ that satisfy (P1).

To this end, we exploit the following property of fractional matchings:
\begin{fact}[Propagation principle]
Let $y$ and $y'$ be fractional matchings that saturate a node $v$. If $y$ and $y'$ disagree on some edge incident to $v$, there must be another edge incident to $v$ where $y$ and $y'$ disagree.
\end{fact}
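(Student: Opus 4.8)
The plan is to prove the Propagation principle directly from the definition of saturation. Recall that a node $v$ is saturated by a fractional matching $y$ precisely when $y[v] = \sum_{e \ni v} y(e) = 1$, where the sum ranges over all edges incident to $v$ (counting a loop once, consistent with the conventions of Section~\ref{sec:loops}). The hypothesis is that $y$ and $y'$ both saturate $v$, so $\sum_{e \ni v} y(e) = 1 = \sum_{e \ni v} y'(e)$, and moreover that there exists an edge $e_0 \ni v$ with $y(e_0) \neq y'(e_0)$. We want to exhibit a second edge $e_1 \ni v$ with $e_1 \neq e_0$ and $y(e_1) \neq y'(e_1)$.

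\textbf{The argument.} Subtracting the two saturation equalities gives
\[
\sum_{e \ni v} \bigl(y(e) - y'(e)\bigr) = 0.
\]
Split off the term for $e_0$:
\[
\bigl(y(e_0) - y'(e_0)\bigr) + \sum_{\substack{e \ni v \\ e \neq e_0}} \bigl(y(e) - y'(e)\bigr) = 0.
\]
Since $y(e_0) - y'(e_0) \neq 0$ by assumption, the remaining sum must be nonzero as well, hence at least one of its terms is nonzero: there is some $e_1 \ni v$ with $e_1 \neq e_0$ and $y(e_1) \neq y'(e_1)$. This is exactly the claimed edge.

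\textbf{Expected obstacle.} There is essentially no mathematical obstacle here—the statement is a one-line consequence of the fact that a sum of reals is zero while one summand is nonzero. The only point requiring a little care is bookkeeping around loops: one must be sure that ``edge incident to $v$'' is interpreted consistently with how $y[v]$ is defined, so that the saturation equation $y[v] = 1$ really is the sum over the same index set that appears in the conclusion. Since both the definition of $y[v]$ and the Propagation principle speak of ``edges incident to $v$,'' and we use the $\EC$-convention that a loop contributes a single incident edge, this is automatic; no subtlety beyond that arises.
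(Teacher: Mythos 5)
Your proof is correct, and it is exactly the elementary cancellation argument that the paper has in mind: the paper states this as a \emph{Fact} without giving any proof, treating it as self-evident, and the reasoning you spell out (subtract the two saturation equations $\sum_{e\ni v} y(e)=1=\sum_{e\ni v} y'(e)$ and observe that a zero sum with one nonzero term must have a second nonzero term) is precisely the intended justification. Your remark about loop conventions is also the right thing to double-check and is handled correctly.
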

Our idea is to apply this principle in a fully saturated graph, where the disagreements propagate until they are resolved at a loop; this is where we locate $g_{i+1}$ and $h_{i+1}$. See Figure~\ref{fig:ecpropagation} for an example.

We consider the following fully saturated fractional matchings on~$G$:
\begin{align*}
y\  &=\ \text{the \FM determined by $\Alg$'s output on the nodes $V(G)$ in $\GG$}, \\[-3pt]
y'\ &=\ \text{the \FM determined by $\Alg$'s output on the nodes $V(G)$ in $\GH$}.
\end{align*}
Starting at the node $g\in V(G)$ we already know by assumption that $y$ and $y'$ disagree on the colour-$c$ edge incident to $g$. Thus, by the propagation principle, $y$ and $y'$ disagree on some other edge incident to $g$. If this edge is not a loop, it connects to a neighbour $g'\in V(G)$ of $g$ and the argument can be continued: because $y$ and $y'$ disagree on $\{g,g'\}$, there must be another edge incident to $g'$ where $y$ and $y'$ disagree, and so on. Since $G$ does not have any cycles (apart from the loops), this process has to terminate at some node $g^*\in V(G)$ such that $y$ and $y'$ disagree on a loop $e^*\neq e$ incident to $g^*$. Note that $e^*$ is a loop in both $\GG$ and $\GH$, too. Thus, we have found our candidate $g_{i+1} = h_{i+1} = g^*$.

\begin{figure}
    \centering
    \includegraphics[page=5]{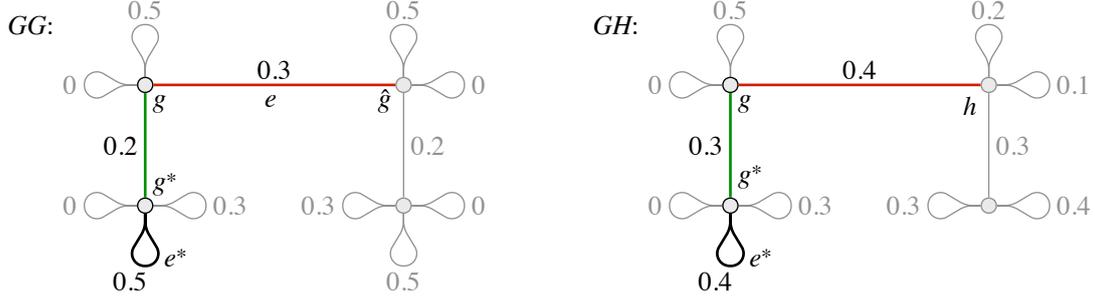}
    \caption{Propagation. The weights of $e$ and $\{g,h\}$ differ. We apply the propagation principle towards the common part $G$ that is shared by $\GG$ and $\GH$. The graphs are loopy and hence all nodes are saturated by $\Alg$; we will eventually find a loop $e^*$ that is present in both $\GG$ and $\GH$, with different weights.}\label{fig:ecpropagation}
\end{figure}

To finish the proof, we need to show that
\begin{equation} \label{eq:neigh-cong}
\tau_{i+1}(\GG,g^*) \cong \tau_{i+1}(\GH,g^*).
\end{equation}
The critical case is when $g^* = g$ as this node is the closest among $V(G)$ to seeing the topological differences between the graphs $\GG$ and $\GH$. Starting from $g$ and stepping along the colour-$c$ edge towards the differences, we arrive, in $\GG$, at a node $\hat{g}$ that is a copy of $g\in V(G)$, and in $\GH$, at the node $h$. But these nodes satisfy
\[
\tau_i(\GG,\hat{g})\cong \tau_i(\GH,h)
\]
by our induction assumption. Using this, (\ref{eq:neigh-cong}) follows.

\section{Local Simulations} \label{sec:simulations}

Now that we have an $\Omega(\Delta)$ time lower bound in the $\EC$ model, our next goal is to extend this result to the $\ID$ model. In this section we implement Step~\ref{step:two} of our plan and give a series of local simulations
\[
\EC \leadsto \PO \leadsto \OI \leadsto \ID.
\]
Here, each simulation preserves the running time of an algorithm up to a constant factor. In particular, together with Step~\ref{step:one}, this will imply the $\Omega(\Delta)$ time lower bound in the $\ID$ model.

\subsection{Simulation \texorpdfstring{$\EC\leadsto\PO$}{EC to PO}} \label{sec:ec-simulates-po}

We start with the easiest simulation.
Suppose there is a $t$-time $\PO$-algorithm for the maximal fractional matching problem on graphs of maximum degree $\Delta$; we describe a $t$-time $\EC$-algorithm for graphs of maximum degree $\Delta/2$.

The local simulation is simple; see Figure~\ref{fig:ecpo}. On input an $\EC$-graph $G$ we interpret each edge $\{u,v\}$ of colour $c$ as two directed edges $(u,v)$ and $(v,u)$, both of colour~$c$; this interpretation makes $G$ into a $\PO$-graph $G_\leftrightarrows$. We can now locally simulate the $\PO$-algorithm on $G_\leftrightarrows$ to obtain an \FM $y$ as output. Finally, we transform $y$ back to an \FM of $G$: the edge $\{u,v\}$ is assigned weight $y(u,v) + y(v,u)$.

\begin{figure}
    \centering
    \includegraphics[page=7]{figs.pdf}
    \caption{$\EC\leadsto\PO$. Mapping an $\EC$-graph $G$ into a $\PO$-graph $G_\leftrightarrows$, and mapping the output of a $\PO$-algorithm back to the original graph.}\label{fig:ecpo}
\end{figure}

\subsection{Tricky identifiers} \label{sec:tricky-ids}

When we are computing a maximal fractional matching $y\colon E(G)\to[0,1]$, we have, a priori, infinitely many choices for the weight $y(e)$ of an edge. For example, in a path on nodes $v_1$, $v_2$, and $v_3$, we can freely choose $y(\{v_1,v_2\})\in[0,1]$ provided we set $y(\{v_2,v_3\})=1-y(\{v_1,v_2\})$. In particular, an $\ID$-algorithm can output edge weights that depend on the node identifiers whose magnitude is not bounded.

Unbounded outputs are tricky from the perspective of proving lower bounds. The main result of the recent work~\cite{goos12local-approximation} is a run-time preserving local simulation $\PO\leadsto \ID$, but the result only holds under the assumption that the solution can be encoded using finitely many values per node on graphs of maximum degree $\Delta$. This restriction has its source in an earlier local simulation $\OI\leadsto\ID$ due to Naor and Stockmeyer~\cite{naor95what} that is crucially using Ramsey's theorem. In fact, these two local simulation results fail if unbounded outputs are allowed; counterexamples include even natural graph problems~\cite{hasemann12scheduling}.

In conclusion, we need an ad hoc argument to establish that an $\ID$-algorithm cannot benefit from unique identifiers in case of the maximal fractional matching problem.

\subsection{Simulation \texorpdfstring{$\PO\leadsto\OI$}{PO to OI}} \label{sec:po-simulates-oi}

Before we address the question of simulating $\ID$-algorithms, we first salvage one part of the result in~\cite{goos12local-approximation}: there is local simulation $\PO\leadsto\OI$ that applies to many locally checkable problems, regardless of the size of the output encoding. Even though this simulation works off-the-shelf in our present setting, we cannot use this result in a black-box fashion, as we need to access its inner workings later in the analysis. Thus, we proceed with a self-contained proof.

The following presentation is considerably simpler than that in~\cite{goos12local-approximation}, since we are only interested in a simulation that produces a \emph{locally maximal} fractional matching, not in a simulation that also provides approximation guarantees on the \emph{total weight}, as does the original result.

\paragraph{\boldmath$\PO$-checkability.}
Maximal fractional matchings are not only locally checkable, but also \emph{$\PO$-checkable}: there is a local $\PO$-algorithm that can check whether a given $y$ is a maximal \FM. An important consequence of $\PO$-checkability is that if $H$ is a lift of $G$ then any $\PO$-algorithm produces a feasible solution on $H$ if and only if it produces a feasible solution on $G$.

\paragraph{Order homogeneity.}
The key to the simulation $\PO\leadsto\OI$ is a \emph{canonical linear order} that can be computed for any tree-like $\PO$-neighbourhood. To define this ordering, let $d$ denote the maximum number of edge colours appearing in the input $\PO$-graphs that have maximum degree $\Delta$, and let $T$ denote the infinite $2d$-regular $d$-edge-coloured $\PO$-tree. We fix a homogeneous linear order for $T$:
\begin{lemma} \label{lem:tree-order}
There is a linear order $\preceq$ on $V(T)$ such that all the ordered neighbourhoods $(T,\preceq,v)$, $v\in V(T)$, are pairwise isomorphic (i.e., up to any radius).
\end{lemma}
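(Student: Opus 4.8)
The plan is to present $T$ as a Cayley graph and transport a suitably invariant group ordering to its vertices. Let $F=F_d$ be the free group on generators $g_1,\dots,g_d$, and form its Cayley graph: the vertices are the elements of $F$, and for every $x\in F$ and colour $c\in\{1,\dots,d\}$ there is a directed colour-$c$ edge from $x$ to $xg_c$. At each vertex $x$ the outgoing edges $x\to xg_c$ receive the $d$ distinct colours $1,\dots,d$, and the incoming edges $xg_c^{-1}\to x$ likewise receive the $d$ distinct colours; there are no loops since $g_c\neq e$; and the Cayley graph of a free group is a tree. Hence this graph is a $2d$-regular, $d$-edge-coloured $\PO$-tree, and since such a tree is unique (it is the universal cover of the one-vertex $\PO$-graph carrying $d$ loops, with the loop/degree conventions of Section~\ref{sec:loops}), it is exactly $T$. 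From now on we identify $V(T)$ with $F$.

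The point of this identification is that left translations are $\PO$-automorphisms: for $h\in F$ the bijection $L_h\colon x\mapsto hx$ sends the colour-$c$ edge $x\to xg_c$ to the colour-$c$ edge $hx\to hxg_c$, so it is an automorphism of $T$ as a $\PO$-graph, and the family $\{L_h\}_{h\in F}$ acts transitively on $V(T)$. Consequently, to prove the lemma it suffices to exhibit a linear order $\preceq$ on $F$ that is preserved by every $L_h$, i.e., a \emph{left-invariant} total order on the group $F$ (equivalently, one whose positive cone $\{z:e\preceq z\}$ is a sub-semigroup). It is a classical fact that free groups admit such orders --- they are in fact bi-orderable; one concrete choice is the Magnus ordering, obtained by embedding $F$ into the ring of formal power series in non-commuting variables $X_1,\dots,X_d$ via $g_c\mapsto 1+X_c$ and comparing two series by the coefficient of their least differing monomial. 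Fix any left-invariant order $\preceq$ on $F=V(T)$.

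To finish, take arbitrary $v,w\in V(T)$ and set $h:=wv^{-1}$, so that $L_h(v)=w$. The map $L_h$ is a $\PO$-automorphism of $T$, and it preserves $\preceq$ by left-invariance, so it is an isomorphism of ordered $\PO$-graphs $(T,\preceq)\to(T,\preceq)$ carrying $v$ to $w$. Restricting $L_h$ to the radius-$r$ ball about $v$ shows $\tau_r(T,\preceq,v)\cong\tau_r(T,\preceq,w)$ for every $r$; since $v,w,r$ were arbitrary, all ordered neighbourhoods $(T,\preceq,v)$ are pairwise isomorphic up to any radius.

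The steps that need genuine care are the identification of $T$ with the Cayley graph of $F_d$ --- routine, but the bookkeeping of the $\PO$ loop conventions should be made explicit --- and the invocation of left-orderability of free groups. If one prefers to avoid citing the latter, it can be built by hand from the lower central series $F=\gamma_1\supset\gamma_2\supset\cdots$ (free groups are residually nilpotent, so $\bigcap_i\gamma_i=\{e\}$): each quotient $\gamma_i/\gamma_{i+1}$ is a finitely generated free abelian group, so fix a basis and order it lexicographically, and compare two elements of $F$ in the first quotient in which their images differ; this yields an explicit bi-invariant order and makes the whole argument self-contained.
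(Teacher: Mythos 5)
Your argument is correct and is essentially the paper's first (``algebraic'') proof in Appendix~\ref{app:tree-order}: identify $T$ with the Cayley graph of the free group $F_d$ and transport a left-invariant linear order on $F_d$ to $V(T)$; the paper cites Neumann for orderability of free groups, while you invoke the Magnus embedding and the lower-central-series construction, which are standard equivalent routes to the same fact. The paper additionally offers a second, self-contained combinatorial proof via an explicit signed path-value built from the $\PO$ port orders, which you did not need.
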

For a proof, see Appendix~\ref{app:tree-order}.

\paragraph{Simulation.}
Let $\Alg_\OI$ be any $t$-time $\OI$-algorithm solving a $\PO$-checkable problem; we describe a $t$-time $\PO$-algorithm $\Alg_\PO$ solving the same problem.

The algorithm $\Alg_\PO$ operates on a $\PO$-graph $G$ as follows; see Figure~\ref{fig:po-simulates-oi}. Given a $\PO$-neighbourhood $\tau:=\tau_t(U_G,v)$, we first embed $\tau$ in $T$: we choose an arbitrary node $u\in V(T)$, identify $v$ with $u$, and let the rest of the embedding $\tau\subseteq (T,u)$ be dictated uniquely by the edge colours. We then use the ordering $\preceq$ inherited from $T$ to order the nodes of $\tau$. By Lemma~\ref{lem:tree-order}, the resulting structure $(\tau,\preceq)$ is independent of the choice of $u$, i.e., the isomorphism type of $(\tau,\preceq)$ is only a function of $\tau$. Finally, we simulate
\begin{equation} \label{eq:po-simulates-oi}
\Alg_\PO(\tau) := \Alg_\OI(\tau,\preceq).
\end{equation}

\begin{figure}
    \centering
    \includegraphics[page=14]{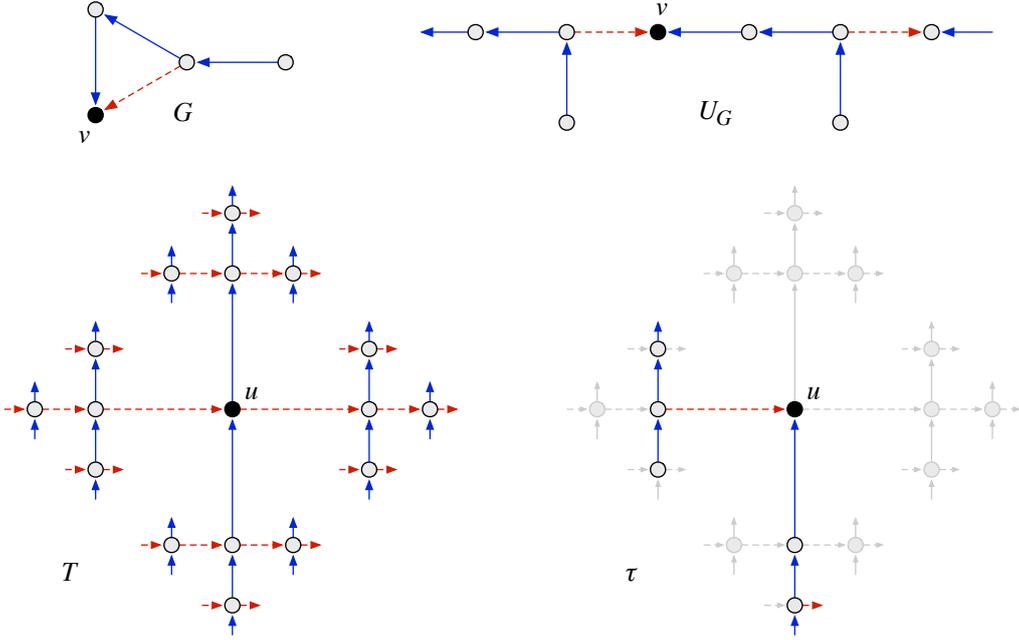}
    \caption{Given a $\PO$-graph $G$, algorithm $\Alg_\PO$ simulates the execution of $\Alg_\OI$ on $\OI$-graph $\tau$. The linear order on $V(\tau)$ is inherited from the regular tree $T$. As $T$ is homogeneous, the linear order does not depend on the choice of node $u$ in $T$.}\label{fig:po-simulates-oi}
\end{figure}

To see that the output of $\Alg_\PO$ is feasible, we argue as follows. Embed the universal cover $U_G$ as a subgraph of $(T,\preceq)$ in a way that respects edge colours. Again, all possible embeddings are isomorphic; we call the inherited ordering $(U_G,\preceq)$ the \emph{canonical ordering} of $U_G$. Our definition of $\Alg_\PO$ and the order homogeneity of $(T,\preceq)$ now imply that
\[
\Alg_\PO(U_G,v) = \Alg_\OI(U_G,\preceq,v)\qquad \text{for all}\ v\in V(U_G).
\]
Therefore, the output of $\Alg_\PO$ is feasible on $U_G$. Finally, by $\PO$-checkability, the output of $\Alg_\PO$ is feasible also on $G$, as desired.

\subsection{Simulation \texorpdfstring{$\OI\leadsto\ID$}{OI to ID}} \label{sec:oi-simulates-id}

The reason why an $\ID$-algorithm $\Alg$ cannot benefit from unbounded identifiers is due to the propagation principle. We formalise this in two steps.
\begin{enumerate}[label=(\roman*)]
\item We use the Naor--Stockmeyer $\OI\leadsto\ID$ result to see that $\Alg$ can be forced to output fully saturated \FM{}s on so-called \emph{loopy} $\OI$-neighbourhoods.
\item We then observe that, on these neighbourhoods, $\Alg$ behaves like an $\OI$-algorithm: $\Alg$'s output cannot change if we relabel a node in an order-preserving fashion, because the changes in the output would have to propagate outside of $\Alg$'s run-time.
\end{enumerate}
That is, our simulation $\OI\leadsto\ID$ will work only on certain types of neighbourhoods (in contrast to our previous simulations), but this will be sufficient for the purposes of the lower bound proof.

\paragraph{Step (i).} Let $\Alg$ be a $t$-time $\ID$-algorithm that computes a maximal fractional matching on graphs of maximum degree $\Delta$.

From $\Alg$ we can derive, by a straightforward simulation, a $t$-time \emph{binary-valued} $\ID$-algorithm $\Alg^*$ that indicates whether $\Alg$ saturates a node. That is, $\Alg^*(G,v) := 1$ if $\Alg$ saturates $v$ in $G$, otherwise $\Alg^*(G,v):=0$. Such saturation indicators $\Alg^*$ were considered previously in~\cite[\defaultS4]{astrand09vc2apx}.

Because (and \emph{only} because) $\Alg^*$ outputs finitely many values, we can now apply the Ramsey technique of Naor and Stockmeyer~\cite[Lemma 3.2]{naor95what}. To avoid notational clutter, we use a version of their result that follows from the application of the infinite Ramsey's theorem (rather than the finite):
\begin{lemma}[Naor and Stockmeyer] \label{lem:ramsey}
There is an infinite set $I\subseteq\N$ such that $\Alg^*$ is an $\OI$-algorithm when restricted to graphs whose identifiers are in $I$. \qed
\end{lemma}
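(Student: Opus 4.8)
The plan is to run the standard Ramsey-theoretic argument of Naor and Stockmeyer. Its point is that a $t$-local $\ID$-algorithm with \emph{bounded} output can extract only finitely many bits from the (bounded-size) pattern of identifiers it sees, so the infinite Ramsey theorem lets us pass to an infinite identifier set on which the algorithm cannot tell apart two identifier patterns of the same order type.

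First I would fix the running time $t$ and observe that on graphs of maximum degree $\Delta$ the neighbourhood $\tau_t(G,v)$ has at most some $N = N(\Delta,t)$ nodes, and that there are only finitely many isomorphism types of such rooted radius-$t$ graphs of maximum degree $\Delta$ (call them \emph{shapes}); for each shape $H$, on $k \le N$ nodes, there are finitely many injections $\iota\colon V(H) \to \{1,\dots,N\}$. Now colour the $N$-element subsets of $\N$ as follows: to $S = \{a_1 < \dots < a_N\}$ assign the tuple recording, for every pair $(H,\iota)$, the value $\Alg^*$ outputs at the root of $H$ when node $w$ is given identifier $a_{\iota(w)}$ (here we use that $\Alg^*(G,v)$ depends only on $\tau_t(G,v)$). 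Since $\Alg^*$ is binary-valued and the index set of the tuple is finite, only finitely many colours occur, so the infinite Ramsey theorem yields an infinite $I \subseteq \N$ all of whose $N$-subsets get the same colour.

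It remains to verify that $\Alg^*$ is an $\OI$-algorithm on graphs with identifiers in $I$, i.e.\ that $\Alg^*(G,v)$ depends only on the order type of $\tau_t(G,v)$. Suppose $\tau_t(G,v)$ and $\tau_t(G',v')$ are isomorphic as ordered structures, both of shape $H$ on $k$ nodes, carrying identifiers $\{b_1 < \dots < b_k\} \subseteq I$ and $\{b_1' < \dots < b_k'\} \subseteq I$; order-isomorphism says a node of $H$ holds the $j$-th smallest identifier in the first copy exactly when it holds the $j$-th smallest in the second. Pick a common tail $b_{k+1} < \dots < b_N$ in $I$ above both $b_k$ and $b_k'$, and set $A = \{b_1,\dots,b_k,b_{k+1},\dots,b_N\}$ and $A' = \{b_1',\dots,b_k',b_{k+1},\dots,b_N\}$, both $N$-subsets of $I$. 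Letting $\iota$ be the injection sending each node of $H$ to the rank of its identifier within $\tau_t(G,v)$ --- which by order-isomorphism is also its rank within $\tau_t(G',v')$, and, since the tails are large, its rank within $A$ and within $A'$ --- the $(H,\iota)$-coordinate of the colour of $A$ is $\Alg^*(G,v)$ and that of $A'$ is $\Alg^*(G',v')$. As $A$ and $A'$ lie in $I$ they have the same colour, so these outputs agree, which is order-invariance.

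The one place that needs care is the bookkeeping of the colouring: the coordinates must be indexed by pairs $(H,\iota)$ rather than by shapes alone, so that every neighbourhood-with-identifiers really appears as a coordinate, yet the colour space must stay finite --- and this finiteness is exactly where binary-valuedness of $\Alg^*$ is indispensable (it is why the analogous statement is false for the original unbounded-output algorithm $\Alg$). Once that is set up, matching a coordinate of $A$ to the same coordinate of $A'$ is routine, since choosing the tail large enough keeps the ranks of $b_1,\dots,b_k$ and of $b_1',\dots,b_k'$ intact.
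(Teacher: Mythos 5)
The paper does not prove Lemma~\ref{lem:ramsey}; it cites it directly as Lemma~3.2 of Naor and Stockmeyer (hence the $\qed$ placed immediately after the statement), so there is no in-paper proof to compare against. Your proposal is a correct, self-contained reconstruction of that result's standard Ramsey-theoretic argument --- colour $N$-subsets of $\N$ by the finite tuple of $\Alg^*$-outputs over shape/injection pairs, pass to an infinite monochromatic $I$ by infinite Ramsey, and verify order-invariance by padding two order-isomorphic labeled $t$-neighbourhoods to $N$-subsets of $I$ that query the same coordinate --- and you correctly isolate the role of binary-valuedness (finiteness of the colour space), which is exactly why the paper substitutes the indicator $\Alg^*$ for $\Alg$ before invoking Ramsey.
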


We say that $\tau_t(U_G,\preceq,v)$ is a \emph{loopy} $\OI$-neighbourhood if $G$ is a loopy $\PO$-graph and $(U_G,\preceq)$ is the canonically ordered universal cover of~$G$. We also denote by $B_t(v)\subseteq V(U_G)$ the node set of $\tau_t(U_G,v)$.

Our saturation indicator $\Alg^*$ is useful in proving the following lemma, which encapsulates step (i) of our argument.
\begin{lemma} \label{lem:oi-saturate}
Let $\tau:=\tau_t(U_G,\preceq,v)$ be loopy. If $\varphi\colon B_t(v)\to I$ is an $\ID$-assignment to the nodes of $\tau$ that respects $\preceq$, then $\Alg$ saturates $v$ under $\varphi$.
\end{lemma}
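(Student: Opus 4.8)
The plan is to argue by contradiction, exploiting the propagation principle together with the loopiness of $G$. Suppose that $\Alg$ does \emph{not} saturate $v$ under the $\ID$-assignment $\varphi\colon B_t(v)\to I$. The key observation is that whether $v$ is saturated depends only on $\tau_t(U_G,v)$ together with the identifiers $\varphi$ assigns to $B_t(v)$; in other words, $\Alg^*(\varphi(U_G),v)=0$, and this value is determined inside the radius-$t$ ball. Since $\varphi$ takes values in the Ramsey set $I$, Lemma~\ref{lem:ramsey} tells us that $\Alg^*$, restricted to such identifier assignments, behaves as an $\OI$-algorithm: the value $\Alg^*(\varphi(U_G),v)$ depends only on the \emph{order type} of $\varphi$ on $B_t(v)$, not on the actual numerical identifiers. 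Because $\varphi$ respects $\preceq$, that order type is exactly the one induced by the canonical order on $\tau=\tau_t(U_G,\preceq,v)$. So the saturation status of $v$ is a function of the loopy $\OI$-neighbourhood $\tau$ alone.

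Next I would extend this to a global anonymous-style argument on $U_G$. Fix \emph{any} order-respecting $\ID$-assignment $\psi\colon V(U_G)\to I$ (this exists since $I$ is infinite and $\preceq$ is a linear order on the countable set $V(U_G)$). Run $\Alg$ on $\psi(U_G)$. For every node $w\in V(U_G)$, the neighbourhood $\tau_t(U_G,\preceq,w)$ is again a loopy $\OI$-neighbourhood — here is where we use that \emph{every} node of a canonically ordered universal cover of a loopy $\PO$-graph sits in a loopy, order-homogeneous-looking neighbourhood, so the previous paragraph applies uniformly. The point is that the saturation indicator $\Alg^*(\psi(U_G),w)$ is, for every $w$, completely determined by the isomorphism type of $\tau_t(U_G,\preceq,w)$, independently of the specific $\psi$. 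In particular, since we assumed $v$ is unsaturated in $\tau$, there is \emph{some} node of $U_G$ with the same neighbourhood type as $v$ — indeed $v$ itself — that $\Alg$ leaves unsaturated under $\psi$.

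Now I would invoke loopiness directly, as in the proof of Lemma~\ref{lem:ec-saturate}. Since $G$ is a loopy $\PO$-graph, in the universal cover $U_G$ the unsaturated node $v$ has a neighbour $v'$ (possibly reached along a fibre of a loop of $G$) whose neighbourhood type in $(U_G,\preceq)$ forces $\Alg^*(\psi(U_G),v')=\Alg^*(\psi(U_G),v)=0$ as well — the output of a $t$-time algorithm on $v'$ is determined by $\tau_t(U_G,\preceq,v')$, which by homogeneity of the loopy structure is isomorphic to a neighbourhood that also leaves its center unsaturated. But then the edge $\{v,v'\}$ has \emph{neither} endpoint saturated in $\psi(U_G)$, contradicting the fact that $\Alg$ (being a correct algorithm for maximal \FM on graphs of maximum degree $\Delta$, and feasible on $U_G$ by $\PO$-checkability / lift-invariance of feasibility) must output a maximal fractional matching. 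This contradiction shows $\Alg$ saturates $v$ under $\varphi$.

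The main obstacle I expect is the bookkeeping in the middle step: making precise that every node in a canonically ordered universal cover of a \emph{loopy} $\PO$-graph really does have a neighbourhood to which the Ramsey/$\OI$ reduction and the loop-forcing argument both apply, and that the relevant neighbourhood isomorphisms are compatible with a single global order-respecting $\psi$. One has to be careful that the "unsaturated $\Rightarrow$ unsaturated neighbour" propagation does not require more than radius $t$ of information and that the loop giving rise to $v'$ is genuinely present in $U_G$ as an edge between two nodes of the same fibre. Once that local-structure lemma is nailed down, the contradiction with maximality is immediate.
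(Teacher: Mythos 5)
Your overall strategy matches the paper's: apply the Ramsey lemma (Lemma~\ref{lem:ramsey}) to make the saturation indicator $\Alg^*$ order-invariant on identifiers from $I$; use loopiness plus homogeneity of the canonical order to find a neighbour $u$ of $v$ in $U_G$ with $\tau_t(U_G,\preceq,v)\cong\tau_t(U_G,\preceq,u)$; and conclude that $\Alg^*$ would then output two adjacent zeroes, contradicting maximality. However, there is a genuine gap in your middle step. You claim that since $I$ is infinite and $\preceq$ is a linear order on the countable set $V(U_G)$, a global order-respecting $\ID$-assignment $\psi\colon V(U_G)\to I$ exists. This is false in general: $I\subseteq\mathbb{N}$ carries the well-order of type $\omega$, whereas the canonical order $\preceq$ on $V(U_G)$ is inherited from a bi-invariant order on the free group underlying $T$ and therefore typically contains infinite descending chains (if $g\succ e$ then $\cdots\prec g^{-2}\prec g^{-1}\prec e\prec g\prec g^{2}\prec\cdots$). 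Already $(\mathbb{Z},\leq)$ does not order-embed into $(\mathbb{N},\leq)$, so the global $\psi$ you fix need not exist once $U_G$ is infinite, which it is whenever $G$ is loopy.

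The paper sidesteps this by never constructing a global assignment: it takes an arbitrary order-respecting $\varphi'\colon B_t(v)\cup B_t(u)\to I$, which always exists because $B_t(v)\cup B_t(u)$ is \emph{finite} and any finite linear order embeds in the infinite $I$. Lemma~\ref{lem:ramsey} then forces $\Alg^*$ to output the same value at $v$ and $u$ under $\varphi'$, two adjacent zeroes give the contradiction with maximality, and order-invariance transfers the resulting conclusion ($\Alg^*(\varphi'(\tau))=1$) back to the given $\varphi$. A further, more minor, point: you justify that $\Alg$ must output a maximal \FM on $\psi(U_G)$ via ``$\PO$-checkability / lift-invariance of feasibility,'' but lift-invariance is a property of the anonymous models $\PO$ and $\EC$, not of the $\ID$-algorithm $\Alg$. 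The correct justification is local realizability: a maximality failure at two adjacent nodes is already witnessed in a finite $\ID$-graph containing the relevant $t$-neighbourhoods with identifiers in $I$, on which $\Alg$ is required to be correct.
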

\begin{proof}
By loopiness of $G$, the node $v$ has a neighbour $u\in V(U_G)$ such that $\tau_t(U_G,v)\cong\tau_t(U_G,u)$ as $\PO$-neighbourhoods. By order homogeneity, $\tau_t(U_G,\preceq,v)\cong\tau_t(U_G,\preceq,v)$ as $\OI$-neighbourhoods. By Lemma \ref{lem:ramsey}, this forces $\Alg^*$ to output the same on $v$ and $u$ under any $\ID$-assignment $\varphi'\colon B_t(v)\cup B_t(u) \to I$ that respects $\preceq$. But $\Alg^*$ cannot output two adjacent $0$'s if $\Alg$ is to produce a maximal fractional matching. Hence, $\Alg^*$ outputs $1$ on $\varphi'(\tau)$. Finally, by order-invariance, $\Alg^*$ outputs $1$ on $\varphi(\tau)$, which proves the claim.
\end{proof}

\paragraph{Step (ii).}
Define $J$ as an infinite subset of $I$ that is obtained by picking every $(m+1)$-th identifier from $I$, where $m$ is the maximum number of nodes in a $(2t+1)$-neighbourhood of maximum degree $\Delta$. That is, for any two $j,j'\in J$, $j<j'$, there are $m$ distinct identifiers $i\in I$ with $j<i<j'$.

The next lemma states that $\Alg$ behaves like an $\OI$-algorithm on loopy neighbourhoods that have identifiers from $J$.
\begin{lemma} \label{lem:loopy-neigh}
Let $\tau:=\tau_t(U_G,\preceq,v)$ be loopy. If $\varphi_1,\varphi_2\colon B_t(v)\to J$ are any two $\ID$-assignments that respect $\preceq$, then $\Alg(\varphi_1(\tau)) = \Alg(\varphi_2(\tau))$.
\end{lemma}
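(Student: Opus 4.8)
I would show that, on a loopy neighbourhood with $J$-identifiers, the $\ID$-algorithm $\Alg$ behaves like an $\OI$-algorithm: its output at $v$ cannot change under an order-preserving relabelling, because any change would be forced---via the propagation principle---to travel arbitrarily far inside the tree $U_G$, overrunning $\Alg$'s run-time. The argument has two parts: (a) reduce to the case where $\varphi_1$ and $\varphi_2$ differ at a \emph{single} node, and (b) handle that single-node change by an ``extend-then-propagate'' argument. For part (a): since $\varphi_1,\varphi_2$ are injective and respect $\preceq$, they induce the \emph{same} strict order on $B_t(v)$; listing it as $w_1\prec\dots\prec w_r$ (with $r\le m$), we get two strictly increasing sequences $\varphi_1(w_1)<\dots<\varphi_1(w_r)$ and $\varphi_2(w_1)<\dots<\varphi_2(w_r)$ in $J$. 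Setting $c_i:=\max\{\varphi_1(w_i),\varphi_2(w_i)\}\in J$, the $c_i$ are again strictly increasing, so I can morph $\varphi_1$ into the map $w_i\mapsto c_i$ by rewriting $w_r,w_{r-1},\dots,w_1$ in this order (each step only raises one coordinate, so every intermediate map is still a strictly increasing $J$-valued assignment respecting $\preceq$), and then morph $w_i\mapsto c_i$ into $\varphi_2$ by rewriting $w_1,w_2,\dots,w_r$ in this order (each step only lowers one coordinate). This produces a chain from $\varphi_1$ to $\varphi_2$ of $\ID$-assignments $B_t(v)\to J$ respecting $\preceq$ in which consecutive maps differ at exactly one node, so it suffices to prove the lemma when $\varphi_1=\psi$ and $\varphi_2=\psi'$ differ only at one node $w\in B_t(v)$.

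\textbf{The single-node case: extension.} Write $a:=\psi(w)$, $b:=\psi'(w)$, say $a<b$ (the other case is symmetric). Since $\psi,\psi'$ agree off $w$ and respect $\preceq$, each $z\in B_t(v)\setminus\{w\}$ has $\psi(z)=\psi'(z)$, and in fact $\psi(z)<\min\{a,b\}$ if $z\prec w$ and $\psi(z)>\max\{a,b\}$ if $z\succ w$; so the nodes of $B_t(v)\setminus\{w\}$ split cleanly into a part below $a$ and a part above $b$. Using this split together with the thinning of $J$ --- every two elements of $J$ have at least $m$ identifiers of $I$ strictly between them, and $B_{2t+1}(w)$ has at most $m$ nodes --- I would extend $\psi$ and $\psi'$ to $\ID$-assignments $\bar\psi,\bar\psi'\colon B_{2t+1}(w)\to I$ that respect $\preceq$, still extend $\psi$ resp.\ $\psi'$, and \emph{agree with each other on $B_{2t+1}(w)\setminus B_t(v)$}; in particular $\bar\psi$ and $\bar\psi'$ still differ at the single node $w$ only. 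Because $B_t(v)\subseteq B_{2t+1}(w)$, locality gives $\Alg(\bar\psi,v)=\Alg(\psi(\tau))$ and $\Alg(\bar\psi',v)=\Alg(\psi'(\tau))$, so the goal becomes: derive a contradiction from $\Alg(\bar\psi,v)\neq\Alg(\bar\psi',v)$.

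\textbf{The single-node case: propagation.} Assume the outputs at $v$ differ; then the two fractional matchings $y,y'$ they induce disagree on some edge incident to $v$. First note that, since $\bar\psi$ and $\bar\psi'$ differ only at $w$, the $\ID$-neighbourhood seen from a node $u$ is the same under both assignments whenever $\dist(u,w)>t$; hence $y$ and $y'$ can disagree only on edges both of whose endpoints lie in $B_t(w)$. Second, for every $u\in B_t(w)$ we have $B_t(u)\subseteq B_{2t+1}(w)=\operatorname{dom}\bar\psi$, with identifiers in $I$ and respecting $\preceq$, and $\tau_t(U_G,\preceq,u)$ is a loopy $\OI$-neighbourhood (it is still built from the loopy $\PO$-graph $G$ and its canonically ordered universal cover), so Lemma~\ref{lem:oi-saturate} applies at $u$: $\Alg$ saturates $u$ under both $\bar\psi$ and $\bar\psi'$. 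Now start at $v$ with a disagreement edge and apply the propagation principle repeatedly: each newly reached node is saturated, so the disagreement passes on to a fresh edge at it; since $U_G$ is a tree (hence loop-free), this is a non-backtracking walk, i.e.\ a path, so it visits infinitely many distinct vertices. But all these vertices lie in the finite set $B_t(w)$ --- a contradiction. Hence $\Alg(\psi(\tau))=\Alg(\psi'(\tau))$, and composing this along the chain from part (a) yields $\Alg(\varphi_1(\tau))=\Alg(\varphi_2(\tau))$.

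\textbf{Main obstacle.} The only genuinely technical point is the simultaneous extension of $\psi,\psi'$ to $\bar\psi,\bar\psi'$: we must insert the at most $m$ ``new'' nodes of $B_{2t+1}(w)$ into $I$ so as to respect $\preceq$ \emph{and} so that the two extensions coincide off $B_t(v)$, including correctly handling nodes that are $\preceq$-below or $\preceq$-above all of $B_t(v)$. This is exactly where the clean below/above split around $\{a,b\}$ and the $m$ free slots of $I$ in every gap of $J$ are used; once this bookkeeping is set up, the rest is the propagation principle plus the elementary fact that a non-backtracking walk in a tree never revisits a vertex.
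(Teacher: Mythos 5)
Your proof is correct and follows essentially the same route as the paper: reduce to a single-node change, extend the two assignments from the $t$-ball to a $(2t+1)$-ball using the $m$-sparsity of $J$ inside $I$, invoke Lemma~\ref{lem:oi-saturate} to get full saturation on a ball, and then use the propagation principle in the tree $U_G$ to derive a contradiction. The only substantive variations are cosmetic: you make the chain $\varphi_1\to\varphi_2$ explicit via the $c_i=\max\{\varphi_1(w_i),\varphi_2(w_i)\}$ trick (the paper simply asserts such a chain exists); you centre the extension at the differing node $w$ rather than at $v$ as the paper does, which makes the confinement ``disagreement edges lie inside $B_t(w)$'' immediate; and you phrase the contradiction as ``a non-backtracking walk in a tree that never terminates must leave the finite set $B_t(w)$,'' whereas the paper's walk is steered away from $v^*$ until it reaches a node whose $t$-view is unchanged. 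These are equivalent in substance.
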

\begin{proof} We first consider the case where $\varphi_1$ and $\varphi_2$ disagree only on a single node~$v^*\in B_t(v)$. Towards a contradiction suppose that
\begin{equation} \label{eq:contr-assumption}
\Alg(\varphi_1(\tau)) \neq \Alg(\varphi_2(\tau)).
\end{equation}

We start with partial $\ID$-assignments for $U_G$ that are defined on the nodes $B_{2t+1}(v)$; this will suffice for running $\Alg$ on the nodes $B_{t+1}(v)$. Indeed, because $J\subseteq I$ is sufficiently sparse, we can extend $\varphi_1$ and $\varphi_2$ into assignments $\bar{\varphi}_1,\bar{\varphi}_2\colon B_{2t+1}(v)\to I$ such that
\begin{itemize}[label=$-$,noitemsep]
\item $\bar{\varphi}_1$ and $\bar{\varphi}_2$ respect $\preceq$, and
\item $\bar{\varphi}_1$ and $\bar{\varphi}_2$ still disagree only on the node $v^*$.
\end{itemize}
Let $y_i$, $i=1,2$, be the fractional matching defined on the edges incident to $B_{t+1}(v)$ that is determined by the output of $\Alg$ on the nodes $B_{t+1}(v)$ under the assignment $\bar{\varphi}_i$. By Lemma~\ref{lem:oi-saturate}, all the nodes $B_{t+1}(v)$ are saturated in both $y_1$ and $y_2$.

Let $D\subseteq U_G$ be the subgraph consisting of the edges $e$ with $y_1(e)\neq y_2(e)$ and of the nodes that are incident to such edges; by (\ref{eq:contr-assumption}), we have $v\in V(D)$. Now we can reinterpret the propagation principle from Section~\ref{sec:lb-in-ec}:
\begin{fact}[Propagation principle]
Each node $u\in B_{t+1}(v)\cap V(D)$ has $\deg_D(u)\geq 2$.
\end{fact}
Using the fact that $D\subseteq U_G$ is a tree, we can start a simple walk at $v\in V(D)$, take the first step away from $v^*$, and finally arrive at a node $u\in B_{t+1}(v)\cap V(D)$ that has $\dist(u,v^*)\geq t+1$, i.e, the node $u$ does not see the difference between the assignments $\bar{\varphi}_1$ and $\bar{\varphi}_2$. But this is a contradiction: as the $t$-neighbourhoods $\bar{\varphi}_i(\tau_{t}(U_G,u))$, $i=1,2$, are the same, so should the weights output by $\Alg$.

\emph{General case.} 
If $\varphi_1,\varphi_2\colon B_t(v)\to J$ are any two assignments respecting~$\preceq$, they can be related to one another by a series of assignments 
\[
\varphi_1=\pi_1,\pi_2,\ldots,\pi_k=\varphi_2,
\]
where any two consecutive assignments $\pi_{i}$ and $\pi_{i+1}$ both respect $\preceq$ and disagree on exactly one node. Thus, the claim follows from the analysis above.
\end{proof}
Let $\Alg_\OI$ be any $t$-time $\OI$-algorithm that agrees with the order-invariant output of $\Alg$ on loopy $\OI$-neighbourhoods that have identifiers from $J$. We now obtain the final form of our $\OI\leadsto\ID$ simulation:
\begin{corollary} \label{cor:oi-simulates-id}
If $G$ is a loopy $\PO$-graph, $\Alg_\OI$ produces a maximal fractional matching on the canonically ordered universal cover $(U_G,\preceq)$.
\end{corollary}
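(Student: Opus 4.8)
The plan is to verify that the node outputs of $\Alg_\OI$ on $(U_G,\preceq)$ fit together into a single fully saturated fractional matching of $U_G$, which is then automatically maximal. Write $y(v):=\Alg_\OI(U_G,\preceq,v)$. Since $G$ is a loopy $\PO$-graph and $(U_G,\preceq)$ is its canonically ordered universal cover, each neighbourhood $\tau_t(U_G,\preceq,v)$ is a loopy $\OI$-neighbourhood; hence, by the construction of $\Alg_\OI$ together with Lemma~\ref{lem:loopy-neigh}, we have $y(v)=\Alg\bigl(\varphi(\tau_t(U_G,\preceq,v))\bigr)$ for \emph{every} $\preceq$-respecting injection $\varphi\colon B_t(v)\to J$. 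Such a $\varphi$ exists because $B_t(v)$ is finite and $J$ is infinite --- the restriction of $\preceq$ to a finite set is merely a finite linear order, which embeds order-preservingly into $J$.

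The heart of the argument is a local gluing step. Fix $v\in V(U_G)$ and let $N:=\tau_{t+1}(U_G,v)$, with node set $B_{t+1}(v)$. Choose a $\preceq$-respecting injection $\varphi\colon B_{t+1}(v)\to J\subseteq I$ and view $\varphi(N)$ as a finite $\ID$-graph $H$; since covering maps preserve degrees, $U_G$ --- and therefore $N$ and $H$ --- has maximum degree at most $\Delta$, so $\Alg$ applies to $H$. As $\Alg$ correctly computes a maximal \FM on every $\ID$-graph of maximum degree $\Delta$, its execution on $H$ yields a genuine maximal fractional matching of $H$; in particular the node outputs of $\Alg$ on $H$ are mutually consistent on the edges of $H$. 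By locality, $\tau_t(H,u)=\varphi(\tau_t(U_G,u))$ for every $u\in B_1(v)$ (this neighbourhood is a subgraph of $\varphi(N)$), so the previous paragraph applied at $u$ gives $\Alg(H,u)=y(u)$. Consequently the outputs $y(u)$, $u\in B_1(v)$, coincide with the correct maximal \FM produced by $\Alg$ on $H$: for each edge $\{v,w\}$ the outputs $y(v)$ and $y(w)$ agree on its weight, so the common value $y(\{v,w\})$ is well defined, and $\sum_{e\ni v}y(e)\le 1$. As $v$ was arbitrary, $y$ is a well-defined fractional matching of $U_G$.

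It remains to check maximality, and here Lemma~\ref{lem:oi-saturate} does the work: for every $v\in V(U_G)$ and every $\preceq$-respecting $\varphi\colon B_t(v)\to I$, the algorithm $\Alg$ saturates $v$ under $\varphi$; choosing $\varphi$ with image in $J\subseteq I$ this says precisely $y[v]=1$. Thus every node of $U_G$ is saturated by $y$, so in particular every edge has a saturated endpoint and $y$ is a maximal fractional matching, as claimed. The one place that needs care is the gluing step --- setting things up so that finitely many separate invocations of $\Alg$ are realised as restrictions of one honest execution of $\Alg$ on a bounded-degree $\ID$-graph with identifiers taken from $J$ in an order-respecting way; once this is done, consistency and the matching inequality fall out of the correctness of $\Alg$, and saturation is supplied directly by Lemma~\ref{lem:oi-saturate}.
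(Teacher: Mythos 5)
Your proof is correct and spells out in detail the Naor--Stockmeyer ``standard argument'' that the paper invokes without elaboration: embed each finite $(t{+}1)$-neighbourhood of $U_G$ into $J$ in a $\preceq$-respecting way, use the correctness of $\Alg$ on the resulting finite $\ID$-graph together with Lemma~\ref{lem:loopy-neigh} to glue the node outputs into one consistent \FM, and then get saturation directly from Lemma~\ref{lem:oi-saturate}. This is essentially the same approach as the paper; you have simply unpacked the citation.
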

\begin{proof}
The claim follows by a standard argument~\cite[Lemma 3.2]{naor95what} from two facts: $J$ is large enough; and maximal fractional matchings are locally checkable.
\end{proof}

\subsection{Conclusion} \label{sec:conclusion}

To get the final lower bound of Theorem~\ref{thm:main} we reason backwards. Assume that $\Alg$ is a $t$-time $\ID$-algorithm that computes a maximal fractional matching on any graph of maximum degree $\Delta$.
\begin{itemize}[leftmargin=*,label={$\EC\leadsto\PO$:}]
    \item[$\OI\leadsto\ID$:] Corollary \ref{cor:oi-simulates-id} in Section~\ref{sec:oi-simulates-id} gives us a $t$-time $\OI$-algorithm $\Alg_\OI$ that computes a maximal fractional matching on the canonically ordered universal cover $(U_G,\preceq)$ for any loopy $\PO$-graph $G$ of maximum degree $\Delta$.
    \item[$\PO\leadsto\OI$:] Simulation (\ref{eq:po-simulates-oi}) in Section~\ref{sec:po-simulates-oi} queries the output of $\Alg_\OI$ only on $(U_G,\preceq)$. This gives us a $t$-time $\PO$-algorithm $\Alg_\PO$ that computes a maximal fractional matching on any loopy $\PO$-graph $G$ of maximum degree $\Delta$.
    \item[$\EC\leadsto\PO$:] The simple simulation in Section~\ref{sec:ec-simulates-po} gives us a $t$-time $\EC$-algorithm $\Alg_\EC$ that computes a maximal fractional matching on any loopy $\EC$-graph $G$ of maximum degree $\Delta/2$.
\end{itemize}
But now we can use the construction of Section~\ref{sec:lb-in-ec}: there is a loopy $\EC$-graph of maximum degree $\Delta/2$ where $\Alg_\EC$ runs for $\Omega(\Delta)$ rounds. Hence the running time of $\Alg$ is also $\Omega(\Delta)$.

\section*{Acknowledgements}

This work is supported in part by the Academy of Finland, Grants 132380 and 252018, and by the Research Funds of the University of Helsinki. The combinatorial proof in Appendix~\ref{app:tree-order} is joint work with Christoph Lenzen and Roger Wattenhofer.

\DeclareUrlCommand{\Doi}{\urlstyle{same}}
\renewcommand{\doi}[1]{\href{http://dx.doi.org/#1}{\footnotesize\sf doi:\Doi{#1}}}

\pagebreak
\appendix
\section{Proof of Lemma~\ref{lem:tree-order}} \label{app:tree-order}

We give two proofs for Lemma~\ref{lem:tree-order}, the second one of which we have not seen in print.

\subsection{Algebraic proof}

The tree $T$ can be thought of as a Cayley graph of the free group on $d$ generators, and the free group admits a linear order that is invariant under the group acting on itself by multiplication; for details, see Neumann~\cite{neumann49ordered} and the discussion in~\cite[\defaultS5]{goos12local-approximation}.

\subsection{Combinatorial proof}
\newcommand{\myarrow}{\!\rightsquigarrow\!}
\newcommand{\Vin}{V_{\textsf{in}}}

In $T$ there is a unique simple directed path $x\myarrow y$ between any two nodes $x,y\in V(T)$. We use $V(x\myarrow y)$ and $E(x\myarrow y)$ to denote the nodes and edges of the path. Also, we set $\Vin(x\myarrow y) := V(x\myarrow y)\smallsetminus \{x,y\}$. We will assign to each path $x\myarrow y$ an integer value, denoted $\llbracket x\myarrow y\rrbracket$, which will determine the relative order of the endpoints.

By definition, in the $\PO$ model, we are given the following linear orders:
\begin{itemize}[label=$-$,noitemsep]
    \item Each node $v \in V(T)$ has a linear order $\prec_v$ on its incident edges.
    \item Each edge $e \in E(T)$ has a linear order $\prec_e$ on its incident nodes.
\end{itemize}
For notational convenience, we extend these relations a little: for $v \in \Vin(x\myarrow y)$ we define $x\prec_v y \iff e\prec_v e'$, where $e$ is the last edge on the path $x\myarrow v$ and $e'$ is the first edge on the path $v\myarrow y$; similarly, for $e\in E(x\myarrow y)$, we define $x\prec_e y\iff x'\prec_e y'$, where $e=\{x',y'\}$ and $x'$ and $y'$ appear on the path $x\myarrow y$ in this order.

\begin{figure}
    \centering
    \includegraphics[page=15]{figs.pdf}
    \caption{In this example, $\llbracket u\myarrow v \rrbracket = +1$, $\llbracket v\myarrow u \rrbracket = -1$, and hence $u \prec v$.}\label{fig:tree-order}
\end{figure}

For any statement $P$, we will use the following type of Iverson bracket notation:
\[
[P] := \begin{cases}
        +1 & \text{if $P$ is true}, \\
        -1 & \text{if $P$ is false}.
    \end{cases}
\]
We can now define
\begin{equation} \label{eq:path-value}
    \llbracket x\myarrow y\rrbracket\ := \sum_{e\in E(x\rightsquigarrow y)} [x\prec_e y]\quad +\ 
    \sum_{v\in \Vin(x\rightsquigarrow y)} [x\prec_v y].
\end{equation}
In particular, $\llbracket x\myarrow x\rrbracket = 0$. The linear order $\prec$ on $V(T)$ is now defined by setting
\[
    x \prec y \iff \llbracket x\myarrow y \rrbracket > 0.
\]
See Figure~\ref{fig:tree-order}. Next, we show that this is indeed a linear order.

\paragraph{Antisymmetry and totality.}

Since $[x\prec_v y] = -[y\prec_v x]$ and $[x\prec_e y] = -[y\prec_e x]$, we have the property that
\[
    \llbracket x\myarrow y\rrbracket = -\llbracket y\myarrow x\rrbracket.
\]
Moreover, if $x\neq y$, the first sum in \eqref{eq:path-value} is odd iff the second sum in \eqref{eq:path-value} is even. Therefore $\llbracket x\myarrow y\rrbracket$ is always odd; in particular, it is non-zero. These properties establish that either $x \prec y$ or $x \prec y$ (but never both).

\paragraph{Transitivity.}
Let $x,y,z \in V(T)$ be three distinct nodes with $x \prec y$ and $y \prec z$; we need to show that $x\prec z$. Denote by $v \in V(T)$ the unique node in the intersection of the paths $x\myarrow z$, $z\myarrow y$, and $y\myarrow x$.

Viewing the path $x\myarrow z$ piecewise as $x\myarrow v \myarrow z$ we write
\[
    \llbracket x\myarrow z\rrbracket
    = \llbracket x\myarrow v\rrbracket
    + [x\prec_v z]
    + \llbracket v\myarrow z\rrbracket,
\]
where it is understood that $[x\prec_v z] := 0$ in the degenerate cases where $v\in\{x,z\}$. Similar decompositions can be written for $z\myarrow y$ and $y\myarrow x$. Indeed, it is easily checked that
\[
      \llbracket x\myarrow z\rrbracket
    + \llbracket z\myarrow y\rrbracket
    + \llbracket y\myarrow x\rrbracket
    = [x\prec_v z]
    + [z\prec_v y]
    + [y\prec_v x].
\]
By assumption, $\llbracket z\myarrow y\rrbracket, \llbracket y\myarrow x\rrbracket \leq -1$, so we get
\[
\llbracket x\myarrow z\rrbracket
\geq 2 + [x\prec_v z]
    + [z\prec_v y]
    + [y\prec_v x].
\]
The only way the right hand side can be negative is if \[[x\prec_v z] = [z\prec_v y]= [y\prec_v x] = -1,\] but this is equivalent to having $z\prec_v x \prec_v y \prec_v z$, which is impossible. Hence $\llbracket x\myarrow z\rrbracket\geq 0$. But since $x\neq z$ we must have in fact that $x \prec z$.

\section{Derandomising Local Algorithms} \label{app:randomness}

As discussed in Section~\ref{sec:tricky-ids}, unbounded outputs require special care. In this Appendix we note that even though Naor and Stockmeyer~\cite{naor95what} assume bounded outputs, their result on derandomising local algorithms applies in our setting, too.

Let $\Alg$ be a randomised $t(\Delta)$-time algorithm that computes a maximal \FM on graphs of maximum degree $\Delta$ or possibly fails with some small probability. Given an assignment of random bit strings $\rho\colon V(G)\to\{0,1\}^*$ to the nodes of a graph $G$, denote by $\Alg^\rho$ the \emph{deterministic} algorithm that computes as $\Alg$, but uses $\rho$ for randomness.

The proof of Theorem 5.1 in~\cite{naor95what} is using the following fact whose proof we reproduce here for convenience.
\begin{lemma}[Naor and Stockmeyer]
For every $n$, there is an $n$-set $S_n\subseteq \N$ of identifiers and an assignment $\rho_n\colon S_n\to \{0,1\}^*$ such that $\Alg^{\rho_n}$ is correct on all graphs that have identifiers from $S_n$.
\end{lemma}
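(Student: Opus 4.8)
The plan is to prove this by a union-bound argument over finitely many identifier sets followed by an application of the infinite pigeonhole principle. First I would fix $n$ and consider the complete graph-independent setup: the algorithm $\Alg$ runs in time $t(\Delta)$, so on any graph of maximum degree $\Delta$ the output at a node $v$ depends only on $\tau_t(G,v)$ together with the random strings assigned in that neighbourhood. Since the identifiers and random strings in a radius-$t$ ball of maximum degree $\Delta$ range over a set whose behaviour is governed by finitely many local configurations (for a fixed choice of which identifiers appear), the failure probability of $\Alg$ on any fixed graph with identifiers from a fixed $n$-set is at most $1/3$ (or whatever the built-in error bound is), and this holds uniformly.

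The key step is to argue that we can select the $n$-set $S_n$ and the randomness assignment $\rho_n$ simultaneously. I would proceed as follows: take any $n$-set of identifiers, say $S = \{1,2,\ldots,n\}$, and consider the probability, over a random assignment $\rho\colon S\to\{0,1\}^*$ (with strings long enough for $\Alg$'s needs on $n$-node graphs), that $\Alg^\rho$ is correct on \emph{all} graphs with node set contained in $S$. There are only finitely many such graphs (at most $2^{\binom{n}{2}}$ of them, times the finitely many ways to restrict to maximum degree $\Delta$), and for each one the probability that $\Alg^\rho$ fails is small; a union bound over this finite collection shows that with positive probability $\rho$ makes $\Alg^\rho$ correct on every graph with identifiers from $S$. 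Hence such a $\rho_n$ exists; set $S_n := S$.

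The main subtlety—and the reason this needs the present ad hoc treatment rather than citing Naor--Stockmeyer verbatim—is that the random strings $\rho_n$ may be of unbounded length, so one must be slightly careful that the union bound is over a genuinely finite set of events. This is fine: for graphs on the fixed node set $S_n$, the running time $t(\Delta)$ is a fixed constant, so only a bounded-length prefix of each node's random string is ever read, and we may take $\rho_n$ to assign strings of exactly that bounded length; the sample space is then finite and the union bound is rigorous. The rest is bookkeeping. I do not expect any genuine obstacle here, only the need to state carefully that it is the \emph{identifier} set that is finite and fixed in advance, while the graph structure and the randomness are what we average and union-bound over; the eventual extraction of an infinite $I\subseteq\N$ on which $\Alg^\rho$ behaves like a deterministic $\OI$-algorithm is then exactly the pigeonhole/Ramsey argument used elsewhere in the paper, applied to the deterministic algorithm $\Alg^{\rho_n}$.
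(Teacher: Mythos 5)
Your union-bound step does not go through, and this is not a bookkeeping issue but the crux of the lemma. You want to argue: each of the finitely many graphs $G$ with $V(G)\subseteq S$ has failure probability at most (say) $1/3$, so a union bound over the $k(n)$-many such graphs shows that a random $\rho$ works for all of them simultaneously with positive probability. But the number of graphs with maximum degree $\Delta$ on a fixed $n$-element identifier set grows like $n^{\Theta(n)}$, while the per-graph failure probability of a randomised $\LOCAL$ algorithm is only a constant (or at best $1/\mathrm{poly}(n)$). The union bound therefore gives a failure bound of $k(n)/3 \gg 1$, i.e.\ nothing. Unless you impose an exponentially small error probability on $\Alg$---which is not part of the model---your argument fails already for $k\ge 3$.

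The paper gets around this with a different, and essential, trick: amplification by disjoint union. Suppose, for contradiction, that \emph{no} $n$-set admits a good assignment $\rho$. Take $q$ pairwise disjoint $n$-sets $X_1,\dots,X_q$. By averaging, for each $i$ there is a graph $G_i$ with $V(G_i)\subseteq X_i$ on which $\Alg$ fails with probability at least $1/k$. Since the node sets are disjoint, $\Alg$'s coin flips on the components of the disjoint union $G = G_1\cup\dots\cup G_q$ are independent, so $\Alg$ fails on $G$ with probability at least $1-(1-1/k)^q$. Choosing $q$ large makes this as close to $1$ as desired, contradicting the assumed correctness of $\Alg$ (which only requires the failure probability to be bounded away from $1$). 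This argument needs only a constant success probability, whereas yours implicitly needs success probability $1 - o(1/k(n))$. So the missing idea is the disjoint-union amplification; the rest of your observations (finiteness of the sample space because only bounded-length prefixes of the random strings are read, the subsequent Ramsey extraction) are fine but do not rescue the union bound.
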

\begin{proof}
Denote by $k=k(n)$ the number of graphs $G$ with $V(G)\subseteq\{1,\ldots,n\}$. Let $X_1,\ldots,X_q\subseteq \N$ be any $q$ disjoint sets of size $n$. Suppose for the sake of contradiction that the claim is false for each $X_i$. That is, for any assignment $\rho\colon X_i\to\{0,1\}^*$ of random bits, $\Alg^\rho$ fails on at least one of the $k$ many graphs $G$ with $V(G)\subseteq X_i$. By averaging, this implies that for each $i$ there is a particular graph $G_i$, $V(G_i)\subseteq X_i$, on which $\Alg$ fails with probability at least $1/k$. Consider the graph $G$ that is the disjoint union of the graphs $G_1,\ldots,G_q$. Since $\Alg$ fails independently on each of the components $G_i$, the failure probability on $G$ is at least $1-(1-1/k)^q$. But this probability can be made arbitrarily close to $1$ by choosing a large enough $q$, which contradicts the correctness of $\Alg$.
\end{proof}

The deterministic algorithms $\Alg^{\rho_n}$ allow us to again obtain a $t(\Delta)$-time $\OI$-algorithm, which establishes the $\Omega(\Delta)$ lower bound for $\Alg$. Only small modifications to Section~\ref{sec:oi-simulates-id} are needed:
\begin{itemize}[label=$-$]
\item {\bf Step (i).}
Instead of the infinite set $I\subseteq\N$ as previously provided by Lemma~\ref{lem:ramsey}, we can use the finite Ramsey's theorem to find arbitrarily large sets $I_n\subseteq S_n$ (i.e., $|I_n|\to\infty$ as $n\to\infty$) with the property that $\Alg^{\rho_n}$ fully saturates the nodes of a loopy $\OI$-neighbourhood that has identifiers from $I_n$ (Lemma~\ref{lem:oi-saturate}).
\item {\bf Step (ii).}
Then, passing again to sufficiently sparse subsets $J_n\subseteq I_n$, we can reprove Lemma~\ref{lem:loopy-neigh} and Corollary~\ref{cor:oi-simulates-id}, which only require that $J$ is large enough.
\end{itemize}
This concludes the lower bound proof for randomised $\LOCAL$ algorithms.

\end{document}